\newif\ifacmstyle
\newcommand*{\defeq}{\coloneqq}
\newtheorem{corollary}{Corollary}
\newtheorem{lemma}{Lemma}
\newtheorem{theorem}{Theorem}
\theoremstyle{definition}
\newtheorem{definition}{Definition}
\newcommand{\para}[1]{\noindent{\bf #1.}}
\newcommand*{\algonamebase}{wiggins}
\newcommand*{\algonamebasecaps}{WIGGINS}
\newcommand*{\algonamebasecap}{Wiggins}
\newcommand*{\algoname}{\textsc{\algonamebase}\xspace}
\newcommand*{\algonameapx}{\textsc{\algonamebase-apx}\xspace}
\newcommand*{\algonamemr}{\textsc{\algonamebase-mr}\xspace}
\newcommand*{\cost}{\mathsf{cost}}
\newcommand*\degree{\mathsf{deg}}
\newcommand*\expect{\mathbb{E}}
\newcommand*\family{\mathcal{F}}
\newcommand*\fresh{\mathsf{f}}
\newcommand*\Sample{\mathcal{I}}
\newcommand*\sched{\mathsf{p}}
\newcommand*\sys{\Gamma}
\begin{document}
%\title{Fast Approximation of Betweenness Centrality through Sampling}
\author{Ahmad Mahmoody\footnote{Contact author}, Matteo Riondato, Eli Upfal\\
Department of Computer Science -- Brown University \\
\url{{ahmad, matteo, eli}@cs.brown.edu}}

\title{\algonamebasecap: Detecting Valuable Information\\in Dynamic Networks Using Limited Resources}
\maketitle

\textit{``Have you found it, Wiggins?''}
\begin{flushright}
	[Sherlock Holmes in \emph{A Study in Scarlet}]
\end{flushright}

\begin{abstract}
	Detecting new information and events in a dynamic network by probing individual
	nodes has many practical applications: discovering new webpages,
	analyzing influence properties in network, and detecting failure propagation
	in electronic circuits or infections in public drinkable water systems. In
	practice, it is infeasible for anyone but the owner of the network (if
	existent) 
	%and government agencies 
	to monitor all nodes at all times. In this
	work we study the constrained setting when the observer can only probe a
	small set of nodes at each time step to check whether new pieces of
	information (items) have reached those nodes.

	We formally define the problem through an infinite time generating process
	that places new items in subsets of nodes according to an unknown
	probability distribution. Items have an exponentially decaying
	\emph{novelty}, modeling their decreasing value. The observer uses a
	\emph{probing schedule} (i.e., a probability distribution over the set of
	nodes) to choose, at each time step, a small set of nodes to check for new
	items. The goal is to compute a schedule that minimizes the average
	novelty of undetected items. We present an algorithm, \algoname, to
	compute the optimal schedule through convex optimization, and then show how
	it can be adapted when the parameters of the problem must be learned or
	change over time. We also present a scalable variant of \algoname for the
	MapReduce framework. The results of our experimental evaluation on real
	social networks demonstrate the practicality of our approach.
\end{abstract}

\section{Introduction}\label{sec:introduction}
Many applications require the detection of events in a network as soon as they
happen or shortly thereafter, as the value of the information obtained by
detecting the events decays rapidly as time passes. For example, an emerging
trend in algorithmic stock trading is the use of automatic search through the
Web and social networks for pieces of information that can be used in trading
decisions before they appear in the more popular news
sites~\citep{Delaney2009,latar2015robot,wallstreet2015,McKinney2011}. Similarly,
intelligence, business and politics analysts are scanning online sources for new
information or rumors. While new items are often reblogged, retweeted, and
posted on a number of sites, it is sufficient to find an item once, as fast as
possible, before it loses its relevance or freshness. There is no benefit in seeing multiple
copies of the same news item or rumor. This is also the case when monitoring for
intrusions, infections, or defects in, respectively, a computer network, a
public water system, or a large electronic circuit.

Monitoring for new events or information is a fundamental search and detection
problem in a distributed data setting, not limited to social networks or graph
analysis. In this setting, the data is distributed among a large number of
nodes, and new items appear in individual nodes (for example, as the products of
processing the data available locally at the node), and may propagate (being
copied) to neighboring nodes on a physical or a virtual network. The goal is to
detect at least one copy of each new item as soon as possible. The search
application can access any node in the system, but it can only \emph{probe}
(i.e., check for new items on) a few nodes at a time. To minimize the time to
find new items, the search application needs to optimize the schedule of probing
nodes, taking into account (i) the distribution of copies of items among the
nodes (to choose which nodes to probe), and (ii) the decay of the items'
\emph{novelty} (or relevance/freshness) over time (to focus the search on most
relevant items). The main challenge is how to devise a good probing schedule in
the absence of prior knowledge about the generation and distribution of items in
the network.

% Commented by Matteo on 06/29
%Another application that faces a similar search problem is processing security
%raw data (e.g. gathered security raw footages that need to be processed).
%Ideally, given limited resources compared to the huge volume of the data, a
%schedule for processing (probing) different raw-dataset has to minimize the
%value of missing information (assuming newer information are more valuable than
%older ones).

\paragraph*{Contributions}
In this work we study the novel problem of computing an optimal node probing
schedule for detecting new items in a network under resource scarceness, i.e.,
when only a few nodes can be probed at a time. Our contributions to the study of
this problem are the following:

\begin{itemize*}
	\item We formalize a generic process that describes the creation and
		distribution of information in a network, and define the computational task of
		learning this process by probing the nodes in the network according to a
		schedule. The process and task are parametrized by the resource
		limitations of the observer and the decay rate of the novelty of
		items. We introduce a \emph{cost} measure to compare different
		schedules: the cost of a schedule is the limit of the average expected
		novelty of uncaught items at each time step. On the basis of
		these concepts,  we formally define the \emph{Optimal Probing Schedule
		Problem}, which requires to find the schedule with minimum cost.
	\item We conduct a theoretical study of the cost of a schedule, showing that
		it can \emph{be computed explicitly} and that it is a \emph{convex
		function} over the space of schedules. We then introduce
		\algoname,\footnote{In the Sherlock Holmes novel \emph{A study in
		scarlet} by A.~Conan Doyle, Wiggins is the leader of the ``Baker Street
		Irregulars'', a band of street urchins employed by Holmes as
		intelligence agents.} an algorithm to compute the optimal schedule by
		solving a constrained convex optimization problem through the use of an
		iterative method based on Lagrange multipliers.
	\item We discuss variants of \algoname for the realistic situation where
		the parameters of the process needs to be learned or can change over
		time. We show how to compute a schedule which is (probabilistically)
		guaranteed to have a cost very close to the optimal by only observing
		the generating process for a limited amount of time. We also present a
		MapReduce adaptation of \algoname to handle very large networks.
	\item Finally, we conduct an extensive experimental evaluation of \algoname
		and its variants, comparing the performances of the schedules it
		computes with natural baselines, and showing how it performs extremely
		well in practice on real social networks when using well-established
		models for generating new items (e.g., the independence cascade
		model~\citep{Kempe2003}).
\end{itemize*}

To the best of our knowledge, the problem we study is novel and we are the first
to devise an algorithm to compute an optimal schedule, both when the generating
process parameters are known and when they need to be learned. 
%A number of variants
%of the problem we introduce can be studied. We defer their analysis to the
%extended version of this work.

\para{Paper Organization} In Sect.~\ref{sec:prelims} we give introductory
definitions, and formally introduce the settings and the problem. We discuss
related works in Sect.~\ref{sec:related_work}. In Sect.~\ref{sec:method} we
describe our algorithm \algoname and its variants. The results of our
experimental evaluation are presented in Sect.~\ref{sec:exp}. We conclude by
outlining directions for future work in Sect.~\ref{sec:concl}.

\section{Problem Definition}\label{sec:prelims}
In this section we formally introduce the problem and define our goal.

Let $G=(V,E)$ be a graph with $|V|=n$ nodes. W.l.o.g. we let $V=[n]$. Let
$\family\subseteq 2^V$ be a collection of subsets of $V$, i.e., a collection of
sets of nodes. Let $\pi$ be a function from $\family$ to $[0,1]$ (not
necessarily a probability distribution). We model the generation and diffusion
of  information in the network by defining a \emph{generating
process} $\sys=(\family,\pi)$.  $\sys$ is a infinite discrete-time process
which, at each time step $t$, generates a collection of sets $\Sample_t\subseteq\family$ such
that each set $S\in\family$ is included in $\Sample_t$ with probability $\pi(S)$,
independently of $t$ and of other sets generated at time $t$ and at time $t'<t$.
For any $t$ and any $S\in\Sample_t$, the ordered pair $(t,S)$ represents an
\emph{item} - a piece of information that was generated at time
$t$ and reached \emph{instantaneously} the nodes in $S$. We choose to model the
diffusion process as instantaneous because this abstraction accurately models the view of an outside
\emph{resource-limited observer} that does not have the resources to monitor simultaneously all
the nodes in the network at the fine time granularity needed to observe the
different stages of the diffusion process.

\para{Probing and schedule} The observer can only monitor the network by
\emph{probing} nodes. Formally, by \emph{probing a node $v\in V$ at time $t$},
we mean obtaining the set $I(t,v)$ of items $(t',S)$ such
that $t'\le t$ and $v\in S$:\footnote{The set $S$ appears in the notation for an
	item only for clarity of presentation: we are \emph{not} assuming that when
we probe a node and find an item $(t,S)$ we obtain information about $S$.}
\[
	I(t,v)\defeq\{(t',S) ~:~ t'\le t, S\in\Sample_{t'}, v\in S\}\enspace.
\]
Let $U_t$ be the union of the sets $\Sample_{t'}$ generated by $\sys$
at any time $t'\le t$, and so $I(t,v)\subseteq U_t$.

%We assume that
%probing a node $v$ at time $t$ is not a computationally expensive operation and
%gives immediate access to $I(t,v)$. This is a realistic assumption:
%for example, accessing the profile of a user in a social network gives access to
%the list of items that reached that user, and is not computationally expensive.
%
We model the \emph{resource limitedness of the observer} through a constant,
user-specified, parameter $c\in\mathbb{N}$, representing the maximum number of
nodes that can be probed at any time, where probing a node $v$ returns the value $I(t,v)$.

The observer chooses the $c$ nodes to probe by following a schedule. In this work
we focus on \emph{memoryless} schedules, i.e., the choice of nodes to probe at
time $t$ is independent from the choice of nodes probed at any time $t'<t$.
More precisely, a \emph{probing $c$-schedule} $\sched$ is a probability
distribution on $V$. At each time $t$, the observer chooses a set $P_t$ of $c$
nodes to probe, such that $P_t$ is obtained through \emph{random sampling of $V$
without replacement according to $\sched$}, independently from $P_{t'}$ from
$t'< t$.
%We postpone the study of adaptive or sequential schedules to future work.

\para{Caught items, uncaught items, and novelty}
We say that an item $(t',S)$ is \emph{caught at time $t\ge t'$} iff
\begin{enumerate*}
	\item a node $v\in S$ is probed at time $t$; and
	\item no node in $S$ was probed in the interval $[t',t-1]$.
\end{enumerate*}

Let $C_t$ be the set of items caught by the observer at any time $t'\le t$. We
have $C_t\subseteq U_t$. Let $N_t= U_t\setminus C_t$ be the \emph{set of
uncaught items at time $t$}, i.e., items that were generated at any time
$t'\le t$ and have not been caught yet at time $t$. For any item $(t',S)\in
N_t$, we define the \emph{$\theta$-novelty of $(t',S)$ at time $t$} as
\[
	\fresh_\theta(t,t',S)\defeq\theta^{t-t'},
\]
where $\theta\in(0,1)$ is a user-specified parameter modeling how fast the
value of an item decreases with time if uncaught. Intuitively, pieces of
information (e.g., rumors) have high value if caught almost as soon as they have
appeared in the network, but their value decreases fast (i.e., exponentially) as
more time passes before being caught, to the point of having no value in the
limit.

\para{Load of the system and cost of a schedule} The set $N_t$ of uncaught items
at time $t$ imposes a \emph{$\theta$-load, $L_\theta(t)$, on the graph at time
$t$}, defined as the sum of the $\theta$-novelty at time $t$ of the items in
$N_t$:
\[
	L_\theta(t)\defeq\sum_{(t',S)\in N_t} \fresh_\theta(t,t',S)\enspace.
\]
The quantity $L_\theta(t)$ is a random variable, depending both on $\sys$ and on
the probing schedule $\sched$, and as such it has an expectation
$\expect[L_\theta(t)]$ w.r.t.~all the randomness in the system. The
\emph{$\theta$-cost of a schedule $\sched$} is defined as the limit, for
$t\rightarrow\infty$, of the average expected load of the system:
\begin{align*}
	\cost_\theta(\sched)&\defeq\lim_{t\rightarrow\infty}\frac{1}{t}\sum_{t'\le
	t}\expect[L_\theta(t)]\\
	&=\lim_{t\rightarrow\infty}\frac{1}{t}\sum_{t'\le
	t}\expect\left[\sum_{(t'',S)\in N_{t'}}\fresh_\theta(t',t'',S)\right]\enspace.
\end{align*}
Intuitively, the load at each time indicates the amount of novelty we did not
catch at that time, and the cost function measures the average of such loss over
time. The limit above always exists (Lemma~\ref{lem:explicit}).

We now have all the necessary ingredients to formally define the problem of
interest in this work.

\para{Problem definition} Let $G=(V,E)$ be a graph and $\sys=(\family,\pi)$ be a
generating process on $G$. Let $c\in\mathbb{N}$ and
$\theta\in(0,1)$. The \emph{$(\theta,c)$-Optimal Probing Schedule Problem}
($(\theta,c)$-OPSP) requires to find the \emph{optimal $c$-schedule} $\sched^*$,
i.e., the schedule with \emph{minimum} $\theta$-cost over the set $\mathsf{S}_c$
of $c$-schedules:
\[
	\sched^*=\arg\min_{\sched} \{\cost_\theta(\sched), \sched\in\mathsf{S}_c\}\enspace.
\]
Thus, the goal is to design a $c$-schedule that discovers the maximum number of items weighted by their
novelty value (which correspond to those generated most recently).
The parameter $\theta$ controls how fast the novelty of an item decays, and influences the choices of a schedule.
%, i.e.,
%how fast they loose value for the agent.
When $\theta$ is closed to $0$, items are relevant only for a few steps and the schedule
must focus on the most recently generated items, catching them as soon as they are generated (or at most shortly
thereafter). At the other extreme ($\theta\approx 1$), an optimal schedule
must maximizes the total number of discovered items, as their novelty decays very
slowly.
%In other words, the optimal schedule is such that elements of $U$ more
%likely to receive ``novel'' items have higher probability of being probed.

Viewing the items as ``information'' disseminated in the network,
an ideal schedule assigns higher probing probability  to nodes
that act as \emph{ information hubs}, i.e., nodes that receive a large number of
items. Thus, an optimal schedule $\sched^*$, identifies
information hubs among the nodes. This task (finding information hubs) can be
seen as the complement of the \emph{influence maximization
problem}~\citep{Kempe2003,Kempe2005}. In the influence maximization problem we
look for a set of nodes that \emph{generate} information that reach most nodes.
In the information hubs problem, we are interested in a set of nodes that
\emph{receive} the most of information, thus the most informative nodes for an
observer.

In the following sections, we may drop the specification of the parameters
from $\theta$-novelty, $\theta$-cost, $\theta$-load, and $c$-schedule, and
from their respective notation, as the parameters will be clear from the
context.

\section{Related Work}\label{sec:related_work}
The novel problem we focus on in this work generalizes and complements a number of problems
studied in the literature.

The  ``Battle of Water Sensor Network'' challenge~\citep{BWSN2008} motivated a
number of works on \emph{outbreak detection}: the goal is to optimally place
static or moving sensors in water networks to detect
contamination~\citep{Leskovec2007,Krause2008,Hart2010}. The optimization can be
done w.r.t.~a number of objectives, such as maximizing the probability of
detection, minimizing the detection time, or minimizing the size of the
subnetwork affected by the phenomena~\citep{Leskovec2007}. A related
work~\citep{AgumbeSuresh2012} considered sensors that are sent along fixed paths
in the network with the goal of gathering sufficient information to locate
possible contaminations. Early detection of contagious outbreaks by monitoring
the neighborhood (friends) of a randomly chosen node (individual) was studied
by~\citet{Christakis2010}.  \citet{Krause2009} present efficient schedules for
minimizing energy consumption in battery operated sensors, while other works
analyzed distributed solutions with limited communication capacities and
costs~\citep{Krause2011Kleinberg,Golovin2010,Krause2011}. % spacesaver
%Another objective in the Sensor Placement problem is to {\bf predict} the phenomena at locations without sensors (e.g., road speeds on highways).  studies this problem where there is restriction on the battery life of the sensors, and thus, the sensors need to be placed and \emph{scheduled} such that at each time just some of the sensors are \emph{active}. Also, \cite{Golovin2010} studies the same problem and provides a distributed solution where the sensors are not connected to a centralized optimizer and they activate themselves using local information. \cite{Krause2011} presents an algorithm in which sensors have to communicate through
%lossy channels in environmental monitoring, and \cite{Krause2011Kleinberg} brings the communication cost into account.
In contrast, our work is geared to detection in huge but virtual networks such
as the Web or social networks embedded in the Internet, where it is possible to
``sense'' or probe (almost) any node at approximately the same cost. Still only
a restricted number of nodes can be probed at each steps but the optimization of
the probing sequence is over a much larger domain, and the goal is to identify
the outbreaks (items) regardless of their size and solely by considering their
interest value.

%An interesting related research direction aims at \emph{controlling} or
%\emph{stopping} contagions in the networks by changing the topology of the
%network (e.g, by protecting or blocking the edges and/or
%nodes)~\cite{Aspnes2006,Kimura2008Minimizing,Kimura2008,Kimura2009,Li2011,He2011,Kuhlman2013}. %
%In our model, though, probing the nodes is the only action we can take.}
%%%%%%%%%%%%%%
%\emph{Anomaly Detection} instead requires to find the events (usually
%structural) that deviate from \emph{normal}
%patterns~\cite{Noble2003,Sun2007,Eberle2007,Papadimitriou2010,Akoglu2010,Bogdanov2013}.
%%%%%%%%%%%%%%%%

Our methods complement the work on \emph{Emerging Topic Detection} where the
goal is to identify emergent topics in a social network, assuming full access to
the stream of all postings. Providers, such as Twitter or Facebook, have an
immediate access to all tweets or postings as they are submitted to their
server~\cite{Cataldi2010,Mathioudakis2010}. Outside observers need an efficient
mechanism to monitor changes, such as the methods developed in this work.

\emph{Web-crawling} is another research area that study how to obtain the
most recent snapshots of the web. However, it differs from our model in two key
points: our model allows items to propagate their copies, and they will be
caught if any of their copies is discovered (where snapshots of a webpage belong
to that page only), and all the generated items should be discovered (and not
just the recent ones)~\cite{dasgupta2007discoverability,wolf2002optimal}.

The goal of {News and Feed Aggregation} problem is to capture updates in news websites (e.g. by RSS feeds)~\cite{survey-oita2011deriving,onlineRef-horincar2014online,adaptive-bright2006adaptive,fast-sia2007efficient}. Our model differs from that setting in that we consider copies of the same news in different web sites as equivalent and therefore are only interested in discovering one of the copies.

\section{The \algonamebasecaps{} Algorithm}\label{sec:method}
In this section we present the algorithm \algoname
(and its variants) for solving
the Optimal Probing Schedule Problem $(\theta,c)$-OPSP for generating process
$\sys=(\family,\pi)$ on a graph $G=(V,E)$.

We start by assuming that we have complete knowledge of $\sys$, i.e., we know
$\family$ and $\pi$. This strong assumption allows us to study the theoretical
properties of the cost function and motivates the design of our algorithm,
\algoname, to compute an optimal schedule. We then remove the assumption and show
how we can extend \algoname to only use a collection of observations from
$\sys$. Then we discuss how to recalibrate our algorithms when the parameters of
the process (e.g., $\pi$ or $\family$) change over time.  Finally, we show an
algorithm for the MapReduce framework that allows us to scale to large networks.

\subsection{Computing the Optimal Schedule}\label{sec:optimize}
We first conduct a theoretical analysis of the cost function $\cost_\theta$.
%and then use the results to develop \algoname, our algorithm to compute the
%optimal $c$-schedule (i.e., solve the $(\theta,c)$-OPSP).

\paragraph{Analysis of the cost function}
Assume for now that we know $\sys$, i.e., we have complete knowledge of
$\family$ and $\pi$. Under this assumption, we can exactly compute the
$\theta$-cost of a $c$-schedule.

\begin{lemma}\label{lem:explicit}
Let $\sched=(\sched_1,\ldots,\sched_n)$ be a $c$-schedule. Then
\begin{equation}\label{eq:explicit}
	\cost(\sched) \defeq
	\lim_{t\rightarrow\infty}\frac{1}{t}\sum_{t'=0}^t\expect[L_\theta(t')] =
	\sum_{S\in \family} \frac{\pi(S)}{1- \theta(1-\sched(S))^c},
\end{equation}
where $\sched(S) = \sum_{v\in S} \sched_v$.
\end{lemma}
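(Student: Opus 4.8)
The plan is to expand $\expect[L_\theta(t)]$ by linearity over individual items, reduce everything to the survival probability of a single item, and then recognize and sum a geometric series; the passage to the cost then follows from a Cesàro averaging argument.

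First I would rewrite the load as a sum of indicators over all \emph{potential} items. Since each $S\in\family$ is placed in $\Sample_{t'}$ at most once, for every pair $(t',S)$ with $t'\le t$ there is at most one item $(t',S)$, so
\[
	L_\theta(t)=\sum_{t'=0}^{t}\sum_{S\in\family}\mathbf{1}[(t',S)\in N_t]\,\theta^{t-t'}.
\]
Taking expectations and using linearity gives $\expect[L_\theta(t)]=\sum_{t'=0}^{t}\sum_{S\in\family}\Pr[(t',S)\in N_t]\,\theta^{t-t'}$, so the whole computation reduces to finding, for each $S$ and each lag $k=t-t'$, the probability that the item $(t',S)$ is generated and still uncaught at time $t$.

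Next I would compute this survival probability by factoring the event $(t',S)\in N_t$ into two independent parts. The generation event $S\in\Sample_{t'}$ has probability $\pi(S)$ and is independent of all probing. Conditioned on generation, the item survives iff no node of $S$ is probed in the rounds that could catch it; here the probing model enters, as a single round misses $S$ entirely with probability $(1-\sched(S))^c$, where $\sched(S)=\sum_{v\in S}\sched_v$. Because the probed sets $P_{t'},P_{t'+1},\dots$ are drawn independently across time, the per-round miss events are independent, so the item of lag $k$ survives with probability $\big((1-\sched(S))^c\big)^{k}$. Writing $\beta_S=(1-\sched(S))^c$ and substituting $k=t-t'$ yields
\[
	\expect[L_\theta(t)]=\sum_{S\in\family}\pi(S)\sum_{k=0}^{t}(\theta\beta_S)^{k}.
\]

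Finally I would pass to the limit. Since $\theta\in(0,1)$ and $\beta_S\le 1$ we have $\theta\beta_S<1$, so each inner geometric series converges to $1/(1-\theta\beta_S)$; because $\family\subseteq 2^V$ is finite, the outer sum is finite and exchanging it with the limit is immediate. Hence $\expect[L_\theta(t)]$ is nondecreasing in $t$ and converges to $\sum_{S\in\family}\pi(S)/(1-\theta(1-\sched(S))^c)$, which in particular shows the relevant quantities are finite. The statement, however, is about the Cesàro average $\frac1t\sum_{t'\le t}\expect[L_\theta(t')]$, so I would finish by invoking the fact that the Cesàro mean of a convergent sequence converges to the same limit; this simultaneously establishes existence of the limit defining $\cost(\sched)$ and produces the claimed closed form. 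The step demanding the most care is the survival-probability computation in the second paragraph: one must justify that a single round misses $S$ with probability exactly $(1-\sched(S))^c$ under the prescribed sampling of $c$ nodes, use independence of probing across time (and of probing from generation), and correctly count the number of rounds an item of lag $t-t'$ must survive, since an off-by-one in whether the generating round can itself catch the item is exactly what distinguishes a numerator of $\pi(S)$ from $\pi(S)\beta_S$. Everything downstream of that is routine given the finiteness of $\family$.
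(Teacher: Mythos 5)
Your proposal is correct and follows essentially the same route as the paper's proof: decompose the expected load by linearity over potential items $(t',S)$, factor the survival probability into the independent generation event (probability $\pi(S)$) and the per-round miss events (probability $(1-\sched(S))^c$ each), sum the resulting geometric series using $\theta(1-\sched(S))^c<1$, and conclude via the Cesàro mean of a convergent sequence. The off-by-one concern you flag is handled identically in the paper (it uses $(1-\sched(S))^{c(t-t')}$ for the lag-$(t-t')$ survival probability), so your argument matches the paper's in every essential step.
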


%\todo[All]{I reorganized the proof, so please review.}
\begin{proof}
	Let $t$ be a time step, and consider the quantity $\expect[L_\theta(t)]$. By
	definition we have
	\[
		\expect[L_\theta(t)]=\expect\left[\sum_{(t',S)\in
		N_t}\fresh_\theta(t,t',S)\right]=\expect\left[\sum_{(t',S)\in
		N_t}\theta^{t-t'}\right],
	\]
	where $N_t$ is the set of uncaught items at time $t$. Let now, for any
	$t'\le t$, $N_{t,t'}\subseteq N_t$ be the set of uncaught items in the form
	$(t',S)$. Then we can write
	\[
		\expect[L_\theta(t)]=\expect\left[\sum_{t'=0}^{t}\sum_{(t',S)\in
			N_{t,t'}}\theta^{t-t'}\right]\enspace.
	\]
	Define now, for each $S\in\family$, the random variable $X_{S,t',t}$ which
	takes the value $\theta^{t-t'}$ if $(t',S)\in N_{t,t'}$, and $0$ otherwise.
	Using the linearity of expectation, we can write:
	\begin{align}\label{eq:loadexp}
		\expect[L_\theta(t)]&=\sum_{S\in\family}\sum_{t'=0}^t\expect[X_S,t',t]\nonumber\\
		&=\sum_{S\in\family}\sum_{t'=0}^t
		\theta^{t-t'}\Pr(X_{S,t,t'}=\theta^{t-t'})\enspace.
	\end{align}

	The r.v. $X_{S,t,t'}$ takes value $\theta^{t-t'}$ if and only if the
	following two events $E_1$ and $E_2$ both take place:
	\begin{itemize*}
		\item $E_1$: the set $S\in F$ belongs to $\Sample_{t'}$, i.e., is
			generated by $\sys$ at time $t'$;
		\item $E_2$: the item $(t',S)$ is uncaught at time $t$. This is
			equivalent to say that no node $v\in S$ was probed in the time
			interval $[t',t]$.
	\end{itemize*}
	We have $\Pr(E_1)=\pi(S)$, and
	\[
		\Pr(E_2)=(1-\sched(S))^{c(t-t')}\enspace.
	\]
	The events $E_1$ and $E_2$ are independent, as the process of probing the
	nodes is independent from the process of generating items, therefore, we
	have
	\[
		\Pr(X_{S,t,t'}=\theta^{t-t'})=\Pr(E_1)\Pr(E_2)=\pi(S)(1-\sched(S))^{c(t-t')}\enspace.
	\]
	We can plug this quantity in the rightmost term of~\eqref{eq:loadexp} and
	write
	\begin{align}
		\lim_{t\rightarrow\infty}\expect[L_\theta(t)]
		&=\lim_{t\rightarrow\infty}\sum_{S\in\family}\sum_{t'=0}^t\theta^{t-t'}\pi(S)(1-\sched(S))^{c(t-t')}\nonumber\\
		&=\lim_{t\rightarrow\infty}\sum_{S\in\family}\pi(S)\sum_{t'=0}^t(\theta(1-\sched(S))^c)^{t}\nonumber\\
		&=\sum_{S\in\family}\frac{\pi(S)}{1-\theta(1-\sched(S))^c},
	\end{align}
	where we used the fact that $\theta(1-\sched(S))^c< 1$. We just showed that
	the sequence $(\expect[L_\theta(t)])_{t\in\mathbb{N}}$ converges as $t\rightarrow \infty$. Therefore,
	its Ces\`aro mean, i.e.,
	$\lim_{t\rightarrow\infty}\frac{1}{t}\sum_{t'=0}^t\expect[L_\theta(t)]$,
	equals to its limit~\citep[Sect.~5.4]{Hardy91} and we have
	\begin{align*}
		\cost_\theta(\sched)&=\lim_{t\rightarrow\infty}\frac{1}{t}\sum_{t'=0}^t\expect[L_\theta(t)] =
		\lim_{t\rightarrow\infty} \expect[L_\theta(t)]\\
%		&=
%		\lim_{t\rightarrow\infty}\frac{1}{t}\sum_{t'=0}^t \sum_{S\in\family}\pi(S)\sum_{t'=0}^t(\theta(1-\sched(S))^c)^{t'}\\
%		&=\lim_{t\rightarrow\infty}\frac{1}{t}t\sum_{S\in\family}\frac{\pi(S)}{1-\theta(1-\sched(S))^c}\\
		&=
		\sum_{S\in\family}\frac{\pi(S)}{1-\theta(1-\sched(S))^c}\enspace.
	\end{align*}
\end{proof}

We now show that $\cost_\theta(\sched)$, as expressed by the
r.h.s.~of~\eqref{eq:explicit} is a convex function over its domain
$\mathsf{S}_c$, the set of all possible $c$-schedules. We then use this result
to show how to compute an optimal schedule.

\begin{theorem}\label{thm:convexity}
	The cost function $\cost_\theta(\sched)$ is a convex function over
	$\mathsf{S}_c$.
\end{theorem}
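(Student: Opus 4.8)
The plan is to argue directly from the closed form for the cost given by Lemma~\ref{lem:explicit}, exploiting the two standard facts that convexity is preserved under (i) precomposition with an affine map and (ii) nonnegative linear combinations. First I would record that the domain $\mathsf{S}_c$ is convex: a $c$-schedule is a probability distribution on $V$, so $\mathsf{S}_c=\{\sched\in\mathbb{R}^n:\sched_v\ge 0,\ \sum_{v}\sched_v=1\}$ is the probability simplex. By Lemma~\ref{lem:explicit},
\[
	\cost_\theta(\sched)=\sum_{S\in\family}\frac{\pi(S)}{1-\theta(1-\sched(S))^c},\qquad \sched(S)=\sum_{v\in S}\sched_v,
\]
and since each weight $\pi(S)\ge 0$, it suffices to show that every summand is a convex function of $\sched$ on $\mathsf{S}_c$.

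Next I would isolate the single-variable behavior. For a fixed $S$, the map $\sched\mapsto\sched(S)=\sum_{v\in S}\sched_v$ is \emph{linear}, hence affine, and takes values in $[0,1]$. Therefore, writing $g(x)\defeq\frac{1}{1-\theta(1-x)^c}$ for $x\in[0,1]$, the $S$-th summand equals $\pi(S)\,g(\sched(S))$, a precomposition of $g$ with an affine map. Because precomposing a convex function with an affine map yields a convex function, the whole problem reduces to the claim that $g$ is convex on $[0,1]$; the nonnegatively weighted sum over $S$ then inherits convexity.

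It remains to verify the one-dimensional claim, which is the only computational step. I would substitute $u=1-x$ (an affine change of variable, so it preserves the sign of the second derivative) and set $f(u)=1-\theta u^c$, so that $g$ becomes $h(u)=1/f(u)$; note $f(u)\ge 1-\theta>0$ on $[0,1]$ since $\theta\in(0,1)$. Differentiating twice gives $h''=\bigl(2(f')^2-f f''\bigr)/f^3$, and with $f'(u)=-c\theta u^{c-1}$, $f''(u)=-c(c-1)\theta u^{c-2}$ the numerator factors as
\[
	2(f'(u))^2-f(u)f''(u)=c\,\theta\,u^{c-2}\bigl[(c-1)+(c+1)\theta u^c\bigr].
\]
For $c\in\mathbb{N}$ (so $c\ge 1$), $\theta\in(0,1)$, and $u\in(0,1]$ every factor on the right is nonnegative, so $h''\ge 0$; the case $c=1$ is harmless since the bracket contributes the $u^{c}=u$ factor that cancels $u^{c-2}=u^{-1}$. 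Hence $h$, and therefore $g$, is convex, completing the argument.

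The only genuine obstacle is this last sign check: the formula for $g$ is a reciprocal of a polynomial in $1-x$, so naive convexity heuristics do not apply, and one must actually compute $h''$ and confirm that its numerator stays nonnegative uniformly in $c$ and $\theta$. The clean factorization above is what makes the sign transparent; everything else is routine invocation of the affine-composition and nonnegative-sum rules.
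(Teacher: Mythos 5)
Your proof is correct. Both you and the paper begin the same way: the domain is the simplex, the cost is a nonnegative combination of the per-set terms, and each term depends on $\sched$ only through the affine functional $\sched(S)$, so everything reduces to a one-dimensional convexity claim. Where you diverge is in how that claim is established. The paper never computes a second derivative of the reciprocal: it shows $g_S(\sched)=\theta(1-\sched(S))^c$ is convex by exhibiting its Hessian as a rank-one outer product $V_SV_S^{\mathsf T}$ (hence PSD), concludes that $1-g_S$ is concave and positive, and then invokes the composition rule that a convex, non-increasing function ($x\mapsto 1/x$ on the positive reals) composed with a concave function is convex. You instead verify convexity of $x\mapsto 1/(1-\theta(1-x)^c)$ by brute force, computing $h''=\bigl(2(f')^2-ff''\bigr)/f^3$ and factoring the numerator as $c\theta u^{c-2}\bigl[(c-1)+(c+1)\theta u^c\bigr]\ge 0$; I checked the algebra and the factorization is right, including the $c=1$ edge case where the bracket supplies the factor of $u$ that cancels $u^{-1}$. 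The paper's route is shorter if one is willing to cite the composition rule (and its Hessian computation has the side benefit of making strict convexity visible when singletons lie in $\family$, which feeds Corollary~\ref{corol:convexity}); your route is more elementary and self-contained, at the price of the explicit sign check, and your observation that the positivity of the denominator follows from $f(u)\ge 1-\theta>0$ is a detail the paper leaves implicit. Either argument is a complete proof of the theorem.
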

\begin{proof}
	For any $S\in\family$, let
	\[
		f_S(\sched) = \frac{1}{1 - \theta(1-\sched(S))^c}\enspace.
	\]
	The function $\cost_\theta(\sched)$ is a linear combination of
	$f_S(\sched)$'s with positive coefficients. Hence to show that
	$\cost_\theta(\sched)$ is convex it is sufficient to show that, for any
	$S\in\family$, $f_S(\sched)$ is convex.

	We start by showing that $g_S(\sched) = \theta(1-\sched(S))^c$ is convex.
	This is due to the fact that its Hessian matrix is positive
	semidefinite~\citep{BoydV04}:
	\[
		\frac{\partial}{\partial \sched_i \partial \sched_j} g_S(\sched)=\left\{
		\begin{array}{lr}
		\theta c(c-1) (1-\sched(S))^{c-2} &  i,j \in S\\
		0 &  \text{otherwise}
		\end{array}
		\right.
	\]
	Let $\mathbf{v}_S$ be a $n\times 1$ vector in $\mathbb{R}^n$ such that its
	$i$-th coordinate is $\left[c(c-1) (1-\sched(S))^{c-2} \right]^{1/2}$ if $i\in
	S$, and $0$ otherwise. We can write the Hessian matrix of $g_S$ as
	\[
		\nabla^2 g_S = V_S * V_S^\mathsf{T},
	\]
	and thus, $\nabla^2 g_S$ is positive semidefinite matrix and $g$ is convex.
	From here, we have that $1-g_S$ is a \emph{concave} function. Since
	$f_S(\sched)=\frac{1}{1-g_S(\sched)}$ and the function $h(x)=\frac{1}{x}$ is
	convex and non-increasing, then $f_S$ is a convex function.
\end{proof}

If for every $v\in V$, $S=\{v\}$ belongs to $\family$, then the
function $g_S$ in the above proof is \emph{strictly} convex, and so is $f_S$.

We then have the following corollary of Thm.~\ref{thm:convexity}.

\begin{corollary}\label{corol:convexity}
	Any schedule $\sched$ with locally minimum cost is an optimal schedule
	(i.e., it has global minimum cost). Furthermore, if for every $v\in V$,
	$\{v\}$ belongs to $\family$, the optimal schedule is \emph{unique}.
\end{corollary}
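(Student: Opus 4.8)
The plan is to derive both claims from Theorem~\ref{thm:convexity} (convexity of $\cost$) together with two elementary observations: first, that the feasible region $\mathsf{S}_c$ is itself convex, and second, the standard variational facts that a convex function on a convex set attains its local minima only at global minima, and that a \emph{strictly} convex function has at most one minimizer. I would begin by recording that $\mathsf{S}_c$, being the set of probability distributions on $V=[n]$, is exactly the probability simplex $\{\sched\in\mathbb{R}^n : \sched_v\ge 0,\ \sum_v\sched_v=1\}$, which is convex, since any convex combination of two distributions is again a distribution. Thus for any $\sched,\sched'\in\mathsf{S}_c$ and $\lambda\in[0,1]$ the point $(1-\lambda)\sched+\lambda\sched'$ lies in $\mathsf{S}_c$, so the convexity inequality of Theorem~\ref{thm:convexity} may be applied freely along segments of the feasible region.

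For the first statement I would argue by contradiction. Suppose $\sched$ is a local minimum of $\cost$ over $\mathsf{S}_c$ that is not global, so there is some $\sched'\in\mathsf{S}_c$ with $\cost(\sched')<\cost(\sched)$. Set $\sched_\lambda=(1-\lambda)\sched+\lambda\sched'$, which lies in $\mathsf{S}_c$ by the convexity of the domain. Then convexity of $\cost$ gives
\[
	\cost(\sched_\lambda)\le(1-\lambda)\cost(\sched)+\lambda\cost(\sched')<\cost(\sched)
\]
for every $\lambda\in(0,1]$. As $\lambda\to 0^+$ the points $\sched_\lambda$ approach $\sched$ while keeping strictly smaller cost, so every neighborhood of $\sched$ in $\mathsf{S}_c$ contains a point of strictly lower cost. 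This contradicts local minimality, proving that every local minimum is global.

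For the uniqueness statement I would invoke the remark immediately following Theorem~\ref{thm:convexity}: when $\{v\}\in\family$ for all $v\in V$, the singleton terms make $\cost$ \emph{strictly} convex on $\mathsf{S}_c$. Assuming for contradiction two distinct optimal schedules $\sched_1\ne\sched_2$, both attaining the minimum cost $m$, their midpoint $\bar{\sched}=\tfrac12(\sched_1+\sched_2)$ again lies in $\mathsf{S}_c$, and strict convexity yields
\[
	\cost(\bar{\sched})<\tfrac12\cost(\sched_1)+\tfrac12\cost(\sched_2)=m,
\]
contradicting minimality of $m$; hence the optimal schedule is unique. I do not anticipate a genuine obstacle, as the corollary is a direct specialization of standard convex-optimization facts once Theorem~\ref{thm:convexity} is in hand. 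The only point requiring a word of care is the transfer of strict convexity from $\mathbb{R}^n$ to the lower-dimensional affine slice $\mathsf{S}_c$, which is immediate since the defining strict inequality holds for \emph{all} pairs of distinct points and in particular for those lying in $\mathsf{S}_c$.
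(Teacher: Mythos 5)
Your proposal is correct and follows exactly the route the paper intends: the corollary is stated without proof as an immediate consequence of Theorem~\ref{thm:convexity} (plus the remark that the singleton sets make the cost strictly convex), and your two standard arguments --- the segment argument for local-implies-global on the convex simplex $\mathsf{S}_c$, and the midpoint argument for uniqueness under strict convexity --- are precisely the omitted details. The only implicit point worth noting is that each individual term $f_{\{v\}}$ is strictly convex only in the coordinate $\sched_v$, so it is the presence of \emph{all} singletons in $\family$ (with positive weight) that makes the full sum strictly convex along any segment between distinct schedules, which is what your midpoint step actually uses.
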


\paragraph{The algorithm}
Corollary~\ref{corol:convexity} implies that one can compute an optimal
$c-$schedule $p^*$ (i.e., solve the $(\theta,c)$-OPSP) by solving the
unconstrained minimization of $\cost_\theta$ over the set $\mathsf{S}_c$ of all
$c$-schedules, or equivalently by solving the following constrained minimization
problem on $\mathbb{R}^n$:
\begin{equation}\label{eq:optimization}
	\boxed{
	\begin{array}{rrclcl}
	\displaystyle \min_{\sched\in\mathbb{R}^n} & \multicolumn{3}{l}{\cost_\theta(\sched)} \\
	&\displaystyle \sum_{i=1}^{n} \sched_i & = & 1 \\
	&\sched_i & \geq & 0 & & \forall i \in \{1,\ldots,n\} \\
	\end{array} }
\end{equation}
Since the function $\cost_\theta$ is convex and the constraints are linear, the
optimal solution can, theoretically, be found efficiently~\citep{BoydV04}. In
practice though, available convex optimization problem solvers can not scale
well with the number $n$ of variables, especially when $n$ is in the millions
as is the case for modern graphs like online social networks or the Web. Hence
we developed \algoname, an iterative method based on \emph{Lagrange
multipliers}~\citep[Sect.~5.1]{BoydV04}, which can scale efficiently and can be
adapted to the MapReduce framework of computation~\citep{dean2008mapreduce}, as
we show in Sect.~\ref{sec:mapreduce}. While we can not prove that this iterative
method always converges, we can prove (Thm.~\ref{thm:optimal}) that (i) if at
any iteration the algorithm examines an optimal schedule, then it will reach
convergence at the next iteration, and (ii) if it converges to a schedule, that
schedule is optimal. In Sect.~\ref{sec:exp} we show our experimental results
illustrating the convergence of \algoname in different cases.

\algoname takes as inputs the collection $\family$, the function $\pi$, and the
parameters $c$ and $\theta$, and outputs a schedule $\sched$ which, if
convergence (defined in the following) has been reached, is the optimal
schedule. It starts from a uniform schedule $\sched^{(0)}$, i.e.,
$\sched^{(0)}_i=1/n$ for all $1\le i\le n$, and iteratively refines it until
convergence (or until a user-specified maximum number of iterations have been
performed). At iteration $j\ge 1$, we compute, for each value $i$, $1\le i\le
n$, the function
\begin{equation}\label{eq:functionw}
	W_i(\sched^{(j-1)}) \defeq \sum_{\substack{S \in \family \\
		\mbox{\scriptsize{s.t. }} i\in
S}} \frac{\theta c \pi(S)
	(1-\sched^{(j-1)}(S))^{c-1}}{(1-\theta(1-\sched^{(j-1)}(S))^c)^2}
\end{equation}
and then set
\[
	\sched^{(j)}_i = \frac{\sched^{(j-1)}_i W_i(\sched^{(j-1)})}{\sum_{z=1}^n
	\sched^{(j-1)}_z W_z(\sched^{(j-1)})}\enspace.
\]
The algorithm then checks whether $\sched^{(j)}=\sched^{(j-1)}$. If so, then we
reached convergence and we can return $\sched^{(j)}$ in output, otherwise we
perform iteration $j+1$. The pseudocode for \algoname is in
Algorithm~\ref{alg:iterative}. The following theorem shows the correctness of
the algorithm in case of convergence.

\begin{theorem}\label{thm:optimal}
	We have that:
	\begin{enumerate*}
		\item if at any iteration $j$ the schedule $\sched^{(j)}$ is optimal,
			then \algoname reaches convergence at iteration $j+1$; and
		\item if \algoname reaches convergence, then the returned schedule
			$\sched$ is optimal.
	\end{enumerate*}
\end{theorem}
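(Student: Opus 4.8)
The plan is to recognize the update rule as a multiplicative, gradient-driven step and to reduce both claims to the Karush--Kuhn--Tucker (KKT) conditions of the convex program~\eqref{eq:optimization}. The key identity, obtained by differentiating the closed form in~\eqref{eq:explicit}, is that $W_i$ from~\eqref{eq:functionw} is exactly the negative partial derivative of the cost,
\[
	W_i(\sched) = -\frac{\partial \cost_\theta(\sched)}{\partial \sched_i}\enspace.
\]
First I would write down the KKT conditions for minimizing $\cost_\theta$ over the simplex $\{\sched : \sum_i \sched_i = 1,\ \sched_i \ge 0\}$: introducing a multiplier $\lambda$ for the equality constraint and multipliers $\mu_i \ge 0$ for non-negativity, stationarity reads $-W_i(\sched) + \lambda - \mu_i = 0$ with complementary slackness $\mu_i \sched_i = 0$. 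Since $\cost_\theta$ is convex (Thm.~\ref{thm:convexity}) and the constraints are linear, these conditions are necessary and sufficient for global optimality, and can be summarized as: there is a value $\lambda$ with $W_i(\sched) = \lambda$ for every $i$ with $\sched_i > 0$ and $W_i(\sched) \le \lambda$ for every $i$ with $\sched_i = 0$.

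For claim (1) I would assume $\sched^{(j)}$ is optimal, hence KKT holds with some $\lambda$. The crucial observation is that the normalizing denominator of the update equals this multiplier: the coordinates $z$ with $\sched^{(j)}_z > 0$ all satisfy $W_z(\sched^{(j)}) = \lambda$, while the remaining coordinates contribute $0$, so $\sum_z \sched^{(j)}_z W_z(\sched^{(j)}) = \lambda \sum_z \sched^{(j)}_z = \lambda$. Substituting into the update gives $\sched^{(j+1)}_i = \sched^{(j)}_i W_i(\sched^{(j)})/\lambda$, which equals $\sched^{(j)}_i$ both on the support (where $W_i = \lambda$) and off it (where $\sched^{(j)}_i = 0$). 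Thus $\sched^{(j+1)} = \sched^{(j)}$ and the algorithm declares convergence at iteration $j+1$.

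For claim (2) I would start from the convergence condition $\sched^{(j)} = \sched^{(j-1)} =: \sched$ and set $Z \defeq \sum_z \sched_z W_z(\sched)$. The fixed-point equation $\sched_i = \sched_i W_i(\sched)/Z$ rearranges to $\sched_i\,(W_i(\sched) - Z) = 0$ for every $i$, so $W_i(\sched) = Z$ for every coordinate with $\sched_i > 0$; this is precisely the equality part of the KKT system with $\lambda = Z$. To conclude optimality it then remains to verify the inequality part, $W_i(\sched) \le Z$ for coordinates with $\sched_i = 0$. Here I would invoke the structure of the iteration: the algorithm starts at the uniform, relative-interior schedule $\sched^{(0)}_i = 1/n > 0$, and the multiplicative update preserves strict positivity because $W_i(\sched) > 0$ for every node $i$ contained in some $S \in \family$ with $\pi(S) > 0$ and $\sched(S) < 1$. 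Consequently every relevant coordinate stays strictly positive throughout, no inequality constraint is active at the fixed point (the only coordinates that can be $0$ are nodes lying in no positive-probability set, for which $W_i = 0 \le Z$ trivially), so the full KKT system holds and, by Thm.~\ref{thm:convexity} and Cor.~\ref{corol:convexity}, $\sched$ is a global minimizer.

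I expect the delicate step to be the treatment of the boundary (zero) coordinates in claim (2): the fixed-point equation by itself only certifies the equality part of the KKT conditions, and a naive argument would overlook that an iterate pinned at a face could satisfy $\sched_i(W_i - Z) = 0$ without satisfying $W_i \le Z$. The resolution is the positivity-preservation invariant of the multiplicative update together with the interior initialization, which rules out active inequality constraints; I would state and prove this invariant explicitly, flagging the harmless exception of nodes that belong to no set of positive probability.
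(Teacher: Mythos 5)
Your proof follows essentially the same route as the paper's: identify $W_i(\sched)$ with $-\partial\cost_\theta(\sched)/\partial\sched_i$, characterize optimality through the stationarity conditions of the constrained convex program, and read both claims off the multiplicative fixed-point equation. The one substantive difference is that you use the full KKT system, including multipliers for the constraints $\sched_i\ge 0$, whereas the paper invokes only the Lagrange condition for the equality constraint $\sum_i\sched_i=1$. This matters precisely where you say it does, in claim (2): the paper infers $W_i(\sched^{(j-1)})=\sum_z\sched^{(j-1)}_z W_z(\sched^{(j-1)})$ for \emph{every} $i$ directly from the fixed-point equation, an inference that is only valid when $\sched^{(j-1)}_i>0$; a coordinate pinned at zero satisfies the fixed point vacuously and could in principle violate the corresponding first-order condition. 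Your positivity-preservation invariant (interior initialization at the uniform schedule plus $W_i(\sched)>0$ for every node contained in a positive-probability set, with the harmless exception of nodes in no such set) closes a gap that the paper's own argument leaves open, so your version is strictly more careful. The only residual caveat is the degenerate case $\sched(S)=1$ with $c\ge 2$, where the term contributed by $S$ to $W_i$ vanishes; starting from the uniform schedule this arises only for $S=V$, so the invariant holds in all non-trivial instances, but it would be worth a one-line remark in a polished write-up.
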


\begin{proof}
	From the method of the Lagrange multipliers~\citep[Sect.~5.1]{BoydV04}, we
	have that, if a schedule $\sched$ is optimal, then there exists a value
	$\lambda\in\mathbb{R}$ such that $\sched$ and $\lambda$ form a solution to
	the following system of $n+1$ equations in $n+1$ unknowns:
	\begin{equation}\label{eq:lagrange}
		\nabla [\cost_\theta(\sched) + \lambda (\sched_1+\ldots+\sched_n-1)] = 0,
	\end{equation}
	where the gradient on the l.h.s.~is taken w.r.t.~(the components of)
	$\sched$ and to $\lambda$ (i.e., has $n+1$ components).

	For $1\le i\le n$, the $i$-th equation induced by~\eqref{eq:lagrange} is
	\[
		\frac{\partial}{\partial \sched_i} \cost_\theta(\sched) + \lambda = 0,
	\]
	or, equivalently,
	\begin{equation}\label{eq:lagrange_i}
		\sum_{\substack{S\in\family\\\mbox{s.t.} i\in S}} \frac{\theta c
			\pi(S) (1-\sched(S))^{c-1}}{(1-\theta(1-\sched(S))^c)^2} =
			\lambda\enspace.
	\end{equation}
	The term on the l.h.s.~is exactly $W_i(\sched)$.
	The $(n+1)$-th equation of the system~\eqref{eq:lagrange} (i.e., the one
	involving the partial derivative w.r.t.~$\lambda$) is
	\begin{equation}\label{eq:lagrange_l}
		\sum_{z=1}^n \sched_z = 1\enspace.
	\end{equation}

	Consider now the first claim of the theorem, and assume that we
	are at iteration $j$ such that $j$ is the minimum iteration index for which
	the schedule $\sched^{(j)}$ computed at the end of iteration $j$ is optimal.
	Then, for any $i$, $1\le i\le n$, we have
	\[
		W_i(\sched^{(j)})=\lambda
	\]
	because $\sched^{(j)}$ is optimal and hence all identities in the form
	of~\eqref{eq:lagrange_i} must be true. For the same reason,
	\eqref{eq:lagrange_l} must also hold for $\sched^{(j)}$.
	Hence, for any $1\le i\le n$, we can write the value $\sched^{(j+1)}_i$
	computed at the end of iteration $j+1$ as
	\[
		\sched^{(j+1)}_i = \frac{\sched^{(j)}_i W_i(\sched^{(j)})}{\sum_{z=1}^n
		\sched^{(j)}_z W_z(\sched^{(j)})}=
		\frac{\sched^{(j)}_i\lambda}{1\lambda}=\sched^{(j)}_i,
	\]
	which means that we reached convergence and \algoname will return
	$\sched^{(j+1)}$, which is optimal. %because $\sched^{(j+1)}=\sched^{(j)}$.

	Consider the second claim of the theorem, and let $j$ be the
	first iteration for which $\sched^{(j)}=\sched^{(j-1)}$. Then we have, for
	any $1\le i\le n$,
	\[
		\sched^{(j)}_i = \frac{\sched^{(j-1)}_i W_i(\sched^{(j-1)})}{\sum_{z=1}^n
		\sched^{(j-1)}_z W_z(\sched^{(j-1)})} = \sched^{(j-1)}_i\enspace.
	\]
	This implies
	\begin{equation}\label{eq:identityw}
		W_i(\sched^{(j-1)}) = \sum_{z=1}^n \sched^{(j-1)}_z W_z(\sched^{(j-1)})
	\end{equation}
	and the r.h.s.~does not depend on $i$, and so neither does
	$W_i(\sched^{(j-1)})$. Hence we have
	$W_1(\sched^{(j-1)})=\dotsb=W_n(\sched^{(j-1)})$ and can
	rewrite~\eqref{eq:identityw} as
	\[
		W_i(\sched^{(j-1)}) =  \sum_{z=1}^n \sched^{(j-1)}_z
		W_i(\sched^{(j-1)}),
	\]
	which implies that the identity~\eqref{eq:lagrange_l} holds for
	$\sched^{(j-1)}$. Moreover, if we set
	\[
		\lambda = W_1(\sched^{(j-1)})
	\]
	we have that all the identities in the form of~\eqref{eq:lagrange_i} hold.
	Then, $\sched^{(j-1)}$ and $\lambda$ form a solution to the
	system~\eqref{eq:lagrange}, which implies that $\sched^{(j-1)}$ is optimal
	and so must be $\sched^{(j)}$, the returned schedule, as it is equal to
	$\sched^{(j-1)}$ because \algoname reached convergence.
\end{proof}

\begin{algorithm}[ht]
	\DontPrintSemicolon
	\SetKwInOut{Input}{input}
	\SetKwInOut{Output}{output}
	\SetKwComment{tcp}{//}{}
	\SetKw{KwBreak}{break}
	\Input{$\family$, $\pi$, $c$, $\theta$, and maximum number $T$ of iterations}
	\Output{A $c$-schedule $\sched$ (with globally minimum $\theta$-cost, in
	case of convergence)}
	\For{$i\leftarrow 1$ \KwTo $n$}{
		$\sched_i \leftarrow 1/n$\;
	}
	\For{$j\leftarrow 1$ \KwTo $T$} {
		\For{$i\leftarrow 1$ \KwTo $n$}{
			$W_i\leftarrow 0$\;
		}
		\For{$S\in \family$} {\label{algline:sum}
			\For{$i\in S$} {
				$W_i \leftarrow W_i + \frac{\theta c \pi(S) (1-\sched(S))^{c-1}}{(1-\theta(1-\sched(S))^c)^2}$\label{algline:w}\;
			}
		}
		$\sched_{\mathrm{old}}\leftarrow \sched$\;
		\For{$i\leftarrow 1$ \KwTo $n$}{
			$\sched_i\leftarrow \frac{\sched_iW_i}{\sum_{i} \sched_iW_i}$\;
		}
		\If(// test for convergence){$\sched_{\mathrm{old}} = \sched$} {
			%\tcp{Reached convergence}
			\KwBreak\;
		}
	}
	\Return{$\sched$}\;
	\caption{\algoname}
	\label{alg:iterative}
\end{algorithm}
% Matteo 7/1. The following sentence is commented because we do not know whether
% is true. It is not clear what $F$ is to me. Ahmad may have an idea.
%``However, this iterative method is convergent if the Jacobian norm of $F$ is
%less than 1, i.e., $\|DF\| < 1$.''

\subsection{Approximation through Sampling}\label{sec:sampcomp}
We now remove the assumption, not realistic in practice, of knowing the
generating process $\sys$ exactly through $\family$ and $\pi$. Instead, we
observe the process using, for a limited time interval, a schedule that iterates
over all nodes (or a schedule that selects each node with uniform probability),
until we have observed, for each time step $t$ in a limited time interval
$[a,b]$, the set $\Sample_t$ generated by $\sys$, and therefore we have access
to a collection
\begin{equation}\label{eq:sample}
	\Sample=\{\Sample_a,\Sample_{a+1},\dotsc,\Sample_b\}.
\end{equation}
We refer to $\Sample$ as a \emph{sample gathered in the time interval
$[a,b]$}. We show that a schedule computed with respect to a sample $\Sample$ taken during an interval of
$ \ell (\Sample) = b-a=O(\varepsilon^{-2}\log n)$ steps has cost which is within  a
multiplicative factor $\varepsilon\in[0,1]$ of the optimal schedule. We then adapt \algoname to optimize with respect to such sample.

%the cost of a schedule can be approximated within a
%multiplicative factor $\varepsilon\in[0,1]$ if $b-a=O(\varepsilon^2\log n)$, and
%that it is possible to adapt \algoname to this case.

We start by defining the cost of a schedule w.r.t.~to a sample $\Sample$.

\begin{definition}\label{def:costsample}
	Suppose $\sched$ is a $c$-schedule and $\Sample$ is as in Equation~\eqref{eq:sample},
	with $\ell (\Sample) = b-a$. The $\theta$-cost of $\sched$ w.r.t.~to $\Sample$
	denoted by $\cost_\theta(\sched,\Sample)$ is defined as
	\[
		\cost_\theta(\sched,\Sample)\defeq\frac{1}{\ell(\Sample)}\sum_{S\in \Sample}
		\frac{1}{1-\theta(1-\sched(S))^c}\enspace.
	\]
\end{definition}

For $1\le i\le n$, define now the functions
\[
	W_i(\sched,\Sample) = \frac{1}{\ell(\Sample)}\sum_{S\in \Sample: i\in S}
	\frac{\theta c (1-\sched(S))^{c-1}}{(1-\theta(1-\sched(S))^c)^2}\enspace.
\]

We can then define a variant of \algoname, which we call \algonameapx. The
differences from \algoname are:
\begin{enumerate*}
	\item the loop on line~\ref{algline:sum} in Alg.~\ref{alg:iterative} is only
		over the sets that appear in at least one $\Sample_j\in\Sample$.
	\item \algonameapx uses the values $W_i(\sched,\Sample)$ (defined above) instead of
		$W_i(\sched)$ (line~\ref{algline:w} in Alg.~\ref{alg:iterative});
\end{enumerate*}

If \algonameapx reaches convergence, it returns a schedule with the minimum cost
w.r.t.~the sample $\Sample$. More formally, by following the same steps as in
the proof of Thm.~\ref{thm:optimal}, we can prove the following result about
\algonameapx.

\begin{lemma}\label{lem:optimal_sample}
	We have that:
	\begin{enumerate*}
		\item if at any iteration $j$ the schedule $\sched^{(j)}$ has minimum
			cost w.r.t.~$\Sample$, then \algonameapx reaches convergence at
			iteration $j+1$; and
		\item if \algonameapx reaches convergence, then the returned schedule
			$\sched$ has minimum cost w.r.t.~$\Sample$.
	\end{enumerate*}
\end{lemma}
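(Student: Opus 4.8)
The plan is to observe that \algonameapx is literally \algoname run on a different generating process, so that Lemma~\ref{lem:optimal_sample} reduces to Theorem~\ref{thm:optimal} with essentially no new argument. Concretely, for each $S\in\family$ let $\hat\pi(S)$ denote the empirical frequency of $S$ in the sample, i.e., the number of indices $j$ with $S\in\Sample_j$ divided by $\ell(\Sample)$. Rewriting the multiplicity-weighted sum in Definition~\ref{def:costsample} as a sum over distinct sets gives
\[
	\cost_\theta(\sched,\Sample)=\sum_{S\in\family}\frac{\hat\pi(S)}{1-\theta(1-\sched(S))^c},
\]
which is exactly the right-hand side of~\eqref{eq:explicit} with $\pi$ replaced by $\hat\pi$. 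The same substitution turns $W_i(\sched)$ into $W_i(\sched,\Sample)$. Thus \algonameapx is \algoname applied to the process $(\family,\hat\pi)$ (restricted to the sets that actually appear in $\Sample$, which is harmless since $\hat\pi(S)=0$ for every other $S$), and ``minimum cost w.r.t.~$\Sample$'' is exactly ``optimal'' for that process.

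Given this identification, first I would record that nothing in the cost-function analysis used that $\pi$ is a probability distribution: the proof of Theorem~\ref{thm:convexity} only needs the coefficients $\hat\pi(S)$ to be nonnegative, so $\cost_\theta(\cdot,\Sample)$ is convex over $\mathsf{S}_c$ and Corollary~\ref{corol:convexity} applies verbatim (local minimum $=$ global minimum). Next I would verify, by the same derivative computation as in Theorem~\ref{thm:optimal}, that $W_i(\sched,\Sample)=-\partial\cost_\theta(\sched,\Sample)/\partial\sched_i$, so that the Lagrangian $\cost_\theta(\sched,\Sample)+\lambda(\sched_1+\dots+\sched_n-1)$ has stationarity conditions $W_i(\sched,\Sample)=\lambda$ for every $i$ together with $\sum_{z}\sched_z=1$, i.e., the analogues of~\eqref{eq:lagrange_i} and~\eqref{eq:lagrange_l}.

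The two claims then follow word for word from the proof of Theorem~\ref{thm:optimal}. For claim~(1), if $\sched^{(j)}$ has minimum cost w.r.t.~$\Sample$ then all $W_i(\sched^{(j)},\Sample)$ equal a common $\lambda$ and $\sum_z\sched_z^{(j)}=1$, so the multiplicative update yields $\sched_i^{(j+1)}=\sched_i^{(j)}\lambda/(\lambda\cdot 1)=\sched_i^{(j)}$ and convergence is reached at iteration $j+1$. For claim~(2), convergence means $\sched^{(j)}=\sched^{(j-1)}$, which forces $W_i(\sched^{(j-1)},\Sample)=\sum_z\sched_z^{(j-1)}W_z(\sched^{(j-1)},\Sample)$ for every coordinate of positive mass; since the right-hand side is independent of $i$, setting $\lambda$ to this common value recovers the full stationarity system, so $\sched^{(j-1)}$ (hence the returned $\sched=\sched^{(j)}$) has minimum cost w.r.t.~$\Sample$.

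I do not expect a genuine obstacle here: the entire content is the bookkeeping in the first paragraph, namely checking that the multiplicity-weighted average in Definition~\ref{def:costsample} is exactly the $\pi\mapsto\hat\pi$ substitution and that $W_i(\sched,\Sample)$ is the matching gradient. The two minor points to handle with the same care as in Theorem~\ref{thm:optimal} are that $\theta(1-\sched(S))^c<1$ still holds (so all terms are finite and differentiable), and that coordinates with $\sched_i=0$ satisfy the update identity trivially and are therefore treated exactly as in the original argument.
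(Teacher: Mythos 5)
Your proposal is correct and matches the paper's intent exactly: the paper itself only asserts that the lemma follows ``by the same steps as in the proof of Thm.~\ref{thm:optimal},'' and your identification of $\cost_\theta(\cdot,\Sample)$ with the cost of the empirical process $(\family,\hat\pi)$ (plus the matching gradient check for $W_i(\sched,\Sample)$) is precisely the bookkeeping needed to make that reduction explicit. No gap; your write-up is, if anything, more complete than the paper's.
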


Let $\ell(\Sample)$ denote the length of the time interval during which
$\Sample$ was collected. For a $c$-schedule $\sched$,
$\cost_\theta(\sched,\Sample)$ is an approximation of $\cost_\theta(\sched)$,
and intuitively the larger $\ell(\Sample)$, the better the approximation.

We now show that, if $\ell(\Sample)$ is large enough, then, with high
probability (i.e., with probability at least $1-1/n^r$ for some constant $r$),
the schedule $\sched$ returned by \algonameapx in case of convergence has a cost
$\cost_\theta(\sched)$ that is close to the cost $\cost_\theta(\sched^*)$ of an optimal
schedule $\sched^*$.

\begin{theorem}\label{thm:approx_sample}
	Let $r$ be a positive integer, and let $\Sample$ be a sample gathered during a
	time interval of length
	\begin{equation}\label{eq:samplesize}
		\ell(\Sample) \geq
		\frac{3(r\ln(n)+\ln(4))}{\varepsilon^2(1-\theta)}\enspace.
	\end{equation}
	Let $\sched^*$ be an optimal schedule, i.e., a schedule with minimum cost. If \algonameapx converges, then the returned schedule
	$\sched$ is such that
	\[
		\cost_\theta(\sched^*)\le\cost_\theta(\sched)\le\frac{1+\varepsilon}{1-\varepsilon}\cost_\theta(\sched^*)\enspace.
	\]
\end{theorem}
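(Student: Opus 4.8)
The plan is to show that the empirical cost $\cost_\theta(\cdot,\Sample)$ is a multiplicatively accurate estimate of the true cost $\cost_\theta(\cdot)$ and then to combine this with the fact that the returned schedule is optimal with respect to the sample (Lemma~\ref{lem:optimal_sample}). First I would record that, writing $h_S(\sched)=1/(1-\theta(1-\sched(S))^c)$, the empirical cost is an unbiased estimator of the true cost for every \emph{fixed} $\sched$: since
\[
	\ell(\Sample)\,\cost_\theta(\sched,\Sample)=\sum_{t'=a}^{b}\sum_{S\in\family}\mathbf{1}[S\in\Sample_{t'}]\,h_S(\sched),
\]
and each event $S\in\Sample_{t'}$ occurs with probability $\pi(S)$ independently across $S$ and $t'$, linearity of expectation gives $\expect[\cost_\theta(\sched,\Sample)]=\sum_{S\in\family}\pi(S)h_S(\sched)=\cost_\theta(\sched)$ by Lemma~\ref{lem:explicit}.

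Next I would establish concentration at a fixed $\sched$. The double sum above is a sum of independent random variables, each lying in $[0,1/(1-\theta)]$ because $1\le h_S(\sched)\le 1/(1-\theta)$ for all $S$ and $\sched$. Rescaling by $(1-\theta)$ into $[0,1]$, the rescaled sum has mean $\mu=(1-\theta)\ell(\Sample)\cost_\theta(\sched)$, so the multiplicative Chernoff bound yields tail probabilities with exponent $\Theta(\varepsilon^2\mu)=\Theta\!\big(\varepsilon^2(1-\theta)\ell(\Sample)\cost_\theta(\sched)\big)$. The hypothesis $\ell(\Sample)\ge 3(r\ln n+\ln 4)/(\varepsilon^2(1-\theta))$ makes this exponent at least $(r\ln n+\ln 4)\cost_\theta(\sched)$; since $h_S(\sched)\ge 1$ forces $\cost_\theta(\sched)\ge\sum_{S\in\family}\pi(S)$, as long as this is at least $1$ both relative tails
\[
	\Pr[\cost_\theta(\sched,\Sample)\le(1-\varepsilon)\cost_\theta(\sched)]\quad\text{and}\quad\Pr[\cost_\theta(\sched,\Sample)\ge(1+\varepsilon)\cost_\theta(\sched)]
\]
are at most $n^{-r}/4$.

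With this in hand the assembly is short. The left inequality $\cost_\theta(\sched^*)\le\cost_\theta(\sched)$ is immediate from the optimality of $\sched^*$ for the true cost. For the right inequality I would chain
\[
	\cost_\theta(\sched)\le\frac{\cost_\theta(\sched,\Sample)}{1-\varepsilon}\le\frac{\cost_\theta(\sched^*,\Sample)}{1-\varepsilon}\le\frac{1+\varepsilon}{1-\varepsilon}\,\cost_\theta(\sched^*),
\]
where the first step is the lower-tail bound applied at the returned schedule $\sched$, the middle step is Lemma~\ref{lem:optimal_sample} (which gives $\cost_\theta(\sched,\Sample)\le\cost_\theta(\sched^*,\Sample)$ upon convergence), and the last step is the upper-tail bound applied at the fixed optimum $\sched^*$. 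A union bound over the two invoked tail events keeps the failure probability below $n^{-r}$.

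The main obstacle is that $\sched$ is \emph{data-dependent}: it is computed from $\Sample$, so the lower-tail bound of the second paragraph cannot be invoked at $\sched$ as though $\sched$ were fixed. The crux is therefore to upgrade the per-schedule estimate to one that holds simultaneously for all $\sched$ (at least in the lower-tail direction). I would exploit that $h_S$ depends on $\sched$ only through the linear functional $\sched(S)=\sum_{v\in S}\sched_v$, so the deviation $\cost_\theta(\sched,\Sample)-\cost_\theta(\sched)$ ranges over a structurally constrained family as $\sched$ varies over the simplex; a uniform relative deviation bound for this family removes the adaptivity, and reconciling such a uniform bound with the small sample length $\ell(\Sample)=O(\varepsilon^{-2}\log n)$ is the delicate point. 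A secondary issue to dispatch is the mild condition $\sum_{S\in\family}\pi(S)\ge 1$ (at least one item generated per step in expectation) used to clear the constant $\ln 4$ term.
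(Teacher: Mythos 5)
Your proposal follows essentially the same route as the paper: the paper's proof applies Lemma~\ref{lem:chernoffcost} separately to the two schedules $\sched$ and $\sched^*$, takes a union bound over the two failure events (which is where the $\ln 4$ in~\eqref{eq:samplesize} comes from), and then, using Lemma~\ref{lem:optimal_sample} to get $\cost_\theta(\sched,\Sample)\le\cost_\theta(\sched^*,\Sample)$ upon convergence, chains $(1-\varepsilon)\cost_\theta(\sched)\le\cost_\theta(\sched,\Sample)\le\cost_\theta(\sched^*,\Sample)\le(1+\varepsilon)\cost_\theta(\sched^*)$ exactly as you do. The one point of divergence is that the ``main obstacle'' you identify --- the data-dependence of the returned schedule $\sched$ on $\Sample$ --- is not addressed in the paper at all: Lemma~\ref{lem:chernoffcost} is simply invoked at $\sched$ as though it were a fixed schedule, so your observation that a uniform (at least lower-tail) deviation bound over the simplex of schedules is needed to make this step rigorous is a genuine refinement beyond what the paper actually proves, not something you missed in reconstructing its argument.
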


To prove Thm.~\ref{thm:approx_sample}, we need the following technical lemma.

\begin{lemma}\label{lem:chernoffcost}
	Let $\sched$ be a $c$-schedule and $\Sample$ be a sample gathered during a
	time interval of length
	\begin{align}\label{eq:samp_cond}
		\ell(\Sample) \geq \frac{3(r\ln(n)+\ln(2))}{\varepsilon^2(1-\theta)},
	\end{align}
	where $r$ is any natural number. Then, for every schedule $\sched$ we have
	\[
		\Pr(|\cost_\theta(\sched,\Sample) - \cost_\theta(\sched)|\geq
		\varepsilon\cdot\cost_\theta(\sched)) < \frac{1}{n^r}\enspace .
	\]
\end{lemma}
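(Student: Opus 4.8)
The plan is to express $\cost_\theta(\sched,\Sample)$ as a normalized sum of independent, bounded random variables whose expectation is exactly $\cost_\theta(\sched)$, and then apply a two-sided multiplicative Chernoff bound. Fix the schedule $\sched$ and abbreviate $\phi(S)=\frac{1}{1-\theta(1-\sched(S))^c}$, so that by Lemma~\ref{lem:explicit} we have $\cost_\theta(\sched)=\sum_{S\in\family}\pi(S)\phi(S)$ while, by Definition~\ref{def:costsample}, $\cost_\theta(\sched,\Sample)=\frac{1}{\ell(\Sample)}\sum_{t}\sum_{S\in\Sample_t}\phi(S)$, where the outer sum ranges over the $\ell(\Sample)$ time steps in the observation window. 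Writing $X_{t,S}=\mathbf{1}[S\in\Sample_t]$, the definition of the generating process $\sys$ guarantees that all the variables $X_{t,S}$ (over time steps $t$ and sets $S\in\family$) are mutually independent and Bernoulli with mean $\pi(S)$. Hence $Z\defeq\ell(\Sample)\,\cost_\theta(\sched,\Sample)=\sum_{t,S}\phi(S)X_{t,S}$ is a sum of independent random variables, and by linearity $\mu\defeq\expect[Z]=\ell(\Sample)\,\cost_\theta(\sched)$. The event whose probability we must bound, $|\cost_\theta(\sched,\Sample)-\cost_\theta(\sched)|\ge\varepsilon\cost_\theta(\sched)$, is exactly $|Z-\mu|\ge\varepsilon\mu$.

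The key structural point is the range of the summands: since $\sched(S)\in[0,1]$ we have $\theta(1-\sched(S))^c\in[0,\theta]$, and therefore $1\le\phi(S)\le\frac{1}{1-\theta}$, so each term $\phi(S)X_{t,S}$ takes values in $[0,\frac{1}{1-\theta}]$. Rescaling these variables by the factor $1-\theta$ so that they lie in $[0,1]$ and applying the standard two-sided multiplicative Chernoff bound (valid for $\varepsilon\in[0,1]$, matching the range in Theorem~\ref{thm:approx_sample}) gives
\[
\Pr\!\left(|Z-\mu|\ge\varepsilon\mu\right)\le 2\exp\!\left(-\frac{\varepsilon^2(1-\theta)\mu}{3}\right).
\]
Substituting $\mu=\ell(\Sample)\,\cost_\theta(\sched)$ and the hypothesis on $\ell(\Sample)$ from~\eqref{eq:samp_cond} turns the exponent into $\frac{\varepsilon^2(1-\theta)\ell(\Sample)}{3}\cost_\theta(\sched)\ge (r\ln n+\ln 2)\,\cost_\theta(\sched)$; here the additive $\ln 2$ in the sample-size condition is precisely what will cancel the leading factor $2$, since $2\exp(-(r\ln n+\ln 2))=n^{-r}$.

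The main obstacle is this final step, which needs the lower bound $\cost_\theta(\sched)\ge 1$: a multiplicative (relative-error) Chernoff bound only concentrates well when the mean $\mu$ is large, and the sample-size hypothesis by itself guarantees only $\frac{\varepsilon^2(1-\theta)\ell(\Sample)}{3}\ge r\ln n+\ln 2$, so one still needs a constant lower bound on the cost to push the exponent over the threshold $r\ln n+\ln 2$. I would establish this using $\phi(S)\ge 1$, which gives $\cost_\theta(\sched)=\sum_{S}\pi(S)\phi(S)\ge\sum_{S}\pi(S)$, and hence $\cost_\theta(\sched)\ge 1$ under the mild assumption that at least one item is generated per step in expectation, i.e.\ $\sum_{S\in\family}\pi(S)\ge 1$. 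With this in hand the tail probability is at most $n^{-r}$, and the strict inequality $<n^{-r}$ claimed in the statement follows from the slack between the lower- and upper-tail exponents ($\mu\varepsilon^2/2$ versus $\mu\varepsilon^2/3$), whose sum is strictly below $2\exp(-\mu\varepsilon^2(1-\theta)/3)$. The remaining verifications—the independence of the $X_{t,S}$ and the boundedness $\phi(S)\le 1/(1-\theta)$—are routine.
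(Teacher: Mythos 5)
Your proof follows essentially the same route as the paper's: write $\ell(\Sample)\,\cost_\theta(\sched,\Sample)$ as a sum of independent random variables, each either $0$ or a value in $[1,\frac{1}{1-\theta}]$, with mean $\ell(\Sample)\,\cost_\theta(\sched)$, rescale into $[0,1]$, and apply a two-sided multiplicative Chernoff bound. The one substantive difference is the normalization, and it is exactly where the issue you flagged lives. You rescale by $1-\theta$, which forces you to assume $\cost_\theta(\sched)\ge 1$ in the final step; the paper instead rescales by $\frac{1-\theta}{Z}$ with $Z=\sum_{S\in\family}\pi(S)$, so that its final step only needs $\cost_\theta(\sched)\ge Z$, which always holds. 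But the paper's normalization merely relocates the problem: after rescaling by $\frac{1-\theta}{Z}$ the summands are bounded by $1/Z$ rather than by $1$ (the paper's intermediate claim $Z\le X\le Z/(1-\theta)$ is false as a statement about the random variable $X$; it only holds for $\expect[X]$), so the Chernoff bound it cites still implicitly requires $Z\ge 1$. Your explicit hypothesis $\sum_{S\in\family}\pi(S)\ge 1$ is therefore not a defect of your write-up but a condition the lemma genuinely needs: if $\family$ consists of a single set $S_0$ with $\pi(S_0)$ far smaller than $1/\ell(\Sample)$, then with probability close to $1$ no item is generated during the observation window, $\cost_\theta(\sched,\Sample)=0$, and the relative error equals $1\ge\varepsilon$, while the sample-length condition~\eqref{eq:samp_cond} does not depend on $\pi$ and hence cannot exclude this. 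In short, your argument is correct modulo the assumption you surfaced, and surfacing it is the right call; the paper's own proof quietly depends on the same thing. (The fuss over strict versus non-strict inequality in $<1/n^r$ is immaterial either way.)
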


\begin{proof}
	For any $S\in\family$, let $X_S$ be a random variable which is
	$\frac{1}{1-\theta(1-\sched(S))^c}$ with probability $\pi(S)$, and zero
	otherwise. Since $\sched(S)\in [0,1]$, we have
	\[
		1\leq X_S \leq \frac{1}{1-\theta}\enspace.
	\]
	If we let $X = \sum_{S\in \family} X_S$, then
	\begin{equation}\label{eq:boundcost}
		\cost_\theta(\sched) = \expect[X] = \sum_{S\in\family} \expect[X_{S}]
		\geq \sum_{S\in\family}\pi(S)\enspace.
	\end{equation}
	Let $Z=\sum_{S\in\family}\pi(S)$. Then we have
	\[
		Z \leq X \leq \frac{Z}{1-\theta}\enspace.
	\]
	Let $X^i_S$ be the $i$-th draw of $X_S$, during the time interval $\Sample$ it
	was sampled from, and define $X^i = \sum_{S\in\family}X_S^i$. We have
	\[
		\cost_\theta(\sched,\Sample) =
		\frac{1}{\ell(\Sample)}\sum_{i}X^i\enspace.
	\]
	Let now
	\[
		\mu=\frac{\ell(\Sample)(1-\theta)}{|Z|}\cost_\theta(\sched)\enspace.
	\]
	By using the Chernoff bound for Binomial random
	variables~\citep[Corol.~4.6]{MitzenmacherU05}, we have
	\begin{align*}
		&\Pr\left(\left|\cost_\theta(\sched,\Sample) - \cost_\theta(\sched)\right| \geq \varepsilon
		\cost_\theta(\sched)\right)\\
		&=\Pr\left(\left|\sum_i X^i - \ell(\Sample)\cost_\theta(\sched)\right| \geq \varepsilon
		\ell(\Sample) \cost_\theta(\sched)\right) \\
		&= \Pr\left(\left|\frac{1-\theta}{|Z|}\sum_i X^i - \mu\right| \geq \varepsilon
		\mu\right) \leq 2\exp\left(-\frac{\varepsilon^2\mu}{3}\right)\\
		&\le 2\exp\left(-\frac{\varepsilon^2 \ell(\Sample) (1-\theta)
			\cost_\theta(\sched)}{3|Z|}\right) \leq 2\exp\left(-\frac{\varepsilon^2 \ell(\Sample) (1-\theta)}{3}\right),
	\end{align*}
	where the last inequality follows from the rightmost inequality
	in~\eqref{eq:boundcost}. The thesis follows from our choice of
	$\ell(\Sample)$.
\end{proof}

We can now prove Thm.~\ref{thm:approx_sample}.
\begin{proof}[of Thm.~\ref{thm:approx_sample}]
	The leftmost inequality is immediate, so we focus on the one on the right.
	For our choice of $\ell(\Sample)$ we have, through the union bound, that,
	with probability at least $1-1/n^r$, at the same time:
	\begin{align}\label{eq:boundedcosts}
		(1-\varepsilon)\cost_\theta(\sched)&\leq \cost_\theta(\sched,\Sample) &\le
		(1+\varepsilon)\cost_\theta(\sched) & \mbox{, and}\nonumber\\
		(1-\varepsilon)\cost_\theta(\sched^*)&\leq \cost_\theta(\sched^*,\Sample) &\le
		(1+\varepsilon)\cost_\theta(\sched^*)&
	\end{align}
	Since we assumed that \algonameapx reached convergence when computing
	$\sched$, then Thm.~\ref{thm:approx_sample} holds, and $\sched$ is a
	schedule with minimum cost w.r.t.~$\Sample$. In particular, it must be
	\[
		\cost_\theta(\sched,\Sample)\le \cost_\theta(\sched^*,\Sample)\enspace.
	\]
	From this and~\eqref{eq:boundedcosts}, We then have
	\[
		(1-\varepsilon)\cost_\theta(\sched)\leq \cost_\theta(\sched,\Sample) \le
		\cost_\theta(\sched^*,\Sample)\le (1+\varepsilon)\cost_\theta(\sched^*)
	\]
	and by comparing the leftmost and the rightmost terms we get the thesis.
\end{proof}

% Matteo commented out on 7/1
%Using a very similar argument we have the following theorem:
%\begin{theorem}\label{thm:chernoffW}
% Suppose $\Sample$ is a sample gathered during a time interval of length
% $\ell(\Sample) \geq \frac{3(r\log(n)+\log(2))}{\varepsilon^2(1-\theta)} =
% O\left(\frac{\log(n)}{\varepsilon^2}\right)$. Then, for every schedule $\sched$ and
% $i\in\{1,\dots,n\}$ we have
% $$\Pr(|W_i(\sched,\Sample) - W_i(\sched)|\geq \varepsilon\cdot W_i(\sched)) < \frac{1}{n^r}.$$
%\end{theorem}

\subsection{Dynamic Settings}\label{sec:dynamic}
In this section we discuss how to handle changes in the parameters $\family$ and
$\pi$ as the (unknown) generating process $\sys$  evolves over time.
The idea is to maintain an estimation $\tilde{\pi}(S)$ of $\pi(S)$ for each set $S\in \family$ that we discover in the probing process,
together with the last
time $t$ such that an item $(t,S)$ has been generated (and caught at a time
$t'>t$). If we have not caught an item in the form $(t'',S)$ in an interval
significantly longer than $1/\tilde{\pi}(S)$, then we assume that
the parameters of $\sys$ changed. Hence, we trigger the collection of a new
sample and compute a new schedule as described in Sect.~\ref{sec:sampcomp}.

%In this short section, we consider the case when the parameters of the network
%change (i.e, the distribution over the possible informed-sets changes).  To
%apply our method to a dynamic environment we first sample the process to
%generate a sufficiently large sample collection. This can be done by probing
% all nodes with uniform distribution during a short time interval, or using a
% round robin schedule (Algorithm~\ref{alg:sampler}). For an item the sampling
% process stores the set of nodes that received this item. We then compute an
% optimal schedule with respect to that sample (Algorithm~\ref{alg:iterative}).
% We use this schedule and monitor the cost to detect significant changes. In
% that case we obtain a new sample, optimize with respect to that sample and
% apply the new schedule.

Note that when we adapt our schedule to the new environment (using the most
recent sample) the system converges to its stable setting exponentially (in
$\theta$) fast. Suppose $L$ items have been generated since we detected the
change in the parameters until we adapt the new schedule. These items, if not
caught, loose their novelty exponentially fast, since after $t$ steps their
novelty is at most $L\theta^t$ and decreases exponentially. In our experiments
(Sect.~\ref{sec:exp}) we provide different examples that illustrate how the load
of the generating process becomes stable after the algorithm adapts itself to
the changes of parameters.

\subsection{Scaling up with MapReduce}\label{sec:mapreduce}
In this section, we discuss how to adapt \algonameapx to the
MapReduce framework~\citep{dean2008mapreduce}. We denote the resulting algorithm
as \algonamemr.

In MapReduce, algorithms work in \emph{rounds}. At each round, first a function
$\mathsf{map}$ is executed independently (and therefore potentially massively in
parallel) on each element of the input, and a number (or zero) key-value pairs
of the form $(k,v)$ are emitted. Then, in the second part of the round, the
emitted pairs are partitioned by key and elements with the same key are sent to
the same machine (called the \emph{reducer} for that key), where a function
$\mathsf{reduce}$ is applied to the whole set of received pairs, to emit the
final output.

Each iteration of \algonameapx is spread over two rounds of \algonamemr. At
each round, we assume that the current schedule $\sched$ is available to all
machines (this is done in practice through a distributed cache). In the first
round, we compute the values $p_iW_i$, $1\le i\le n$, in the second round these
values are summed to get the normalization factor, and in the third round the
schedule $\sched$ is updated. The input in the first round are the sets
$S\in\Sample$. The function $\mathsf{map}_1(S)$ outputs, a key-value pair
$(i,v_S)$ for each $i\in S$, with
\[
	v_S = \frac{\theta c
		(1-\sched(S))^{c-1}}{\ell(\Sample)(1-\theta(1-\sched(S))^c)^2}\enspace.
\]
The reducer for the key $i$ receives the pairs $(i,v_S)$ for each $S\in\Sample$
such that $i\in S$, and aggregates them to output the pair $(i,g_i)$, with
\[
g_i=\sched_i\sum v_S = \sched_i W_i\enspace.
\]
The set of pairs $(i, g_i)$, $1\le i\le n$ constitutes the input to the
next round. Each input pair is sent to the same reducer,\footnote{This step can
	be made more scalable through \emph{combiners}, an advanced MapReduce
feature.} which computes the
value
\[
	g = \sum_{i=1}^n g_i = \sum_{i=1}^n\sched_iW_i
\]
and uses it to obtain the new values $\sched_i = g_i / g$, for $1\le i\le n$.
The reducer then outputs $(i,\sched_i)$.  At this point, the new schedule is
distributed to all machines again and a new iteration can start.

The same results we had for the quality of the final schedule computed by
\algonameapx in case of convergence carry over to \algonamemr.

\section{Experimental Results}\label{sec:exp}
In this section we present the results of our experimental evaluation of
\algonameapx.

\para{Goals} First, we show that for a given sample $\Sample$, \algonameapx
converges quickly to a schedule $\sched^*$ that minimizes $\cost_\theta(\sched,
\Sample)$ (see Thm.~\ref{thm:optimal}). In particular, our experiments
illustrate that the sequence $\cost_\theta(\sched^{(1)},\Sample),
\cost_\theta(\sched^{(2)}, \Sample),
\ldots$ is descending and converges after few iterations. Next, we compare the
output schedule of \algonameapx to four other schedules: (i) uniform schedules,
(ii) proportional to out-degrees, (iii) proportional to in-degrees, and (iv) proportional to \emph{undirected} degrees, i.e., the number of incident edges.
Specifically, we compute the costs of these schedules according to a sample
$\Sample$ that satisfies the condition in Lemma~\ref{lem:chernoffcost} and
compare them.
%Next, for further investigation, for each generated sequence of
%$\sched^1,\sched^2,\ldots$ according to a sample $\Sample$, we compute the cost
%of $\sched^i$'s according to another test sample $\Sample'$, and see that the
%sequence $\cost_\theta(\sched^1,\Sample'), \cost_\theta(\sched^2,\Sample'),\ldots$ is still
%descending and converges after few steps.
Then, we consider a specific example for which we know
the \emph{unique} optimal schedule, and show that for larger samples
\algonameapx outputs a schedule closer to the optimal.
 Finally, we demonstrate how our method can adapt itself to the changes in the
network parameters.
% Finally,  we show how one can use {\optimizer} to extract a set of nodes as a solution the the Influence-Maximization problem, and we compare it to \texttt{TIM+}\cite{??}. \texttt{TIM+} is an algorithm for Influence-Maximization problem that is based on sampling \emph{hyper-edges} (see \cite{borgs,??}) and running a greedy algorithm for submodular optimization. We see that by {\optimizer} we can obtain seeds whose influence are close to \texttt{TIM+}'s output seeds with much smaller running time.

% Next, for different samples and different number of iterations, we test the quality of the output schedule, according to a test sample $\Sample'$, i.e., we evaluate the cost of each of the schedules in the iterative method with respect to $\Sample'$, $\cost{\sched,\Sample'}$ . \ahmad{\st{Finally, we illustrate the behavior of {\optimizer} when the network parameters change.}}

\begin{table}[ht]
\centering
\resizebox{0.65\columnwidth}{!}{
\begin{tabular}[scale=0.5]{l r r c r}
\toprule
Datasets	& \#nodes	& \#edges & $(|V_{1K}|,|V_{500}|,|V_{100}|)$ & gen. rate\\
%\hline \hline
\midrule
Enron-Email	& 36692	& 367662&(9,23,517) & 7.22\\
%\hline
Brightkite	&58228	&428156	&(2,7,399)	& 4.54\\
%\hline
%Epinion		&75879	&508837	&8	&45	&917& 12.22\\
%\hline
web-Notredame	&325729	&1497134& (43,80,1619)& 24.49\\
% \hline
web-Google	&875713	&5105039&(134,180,3546)	& 57.86\\
%\hline
%Twitter 	&81306	&1768149&43	&108	&2674& 36.44\\
%\hline
\bottomrule
\end{tabular}
}
\caption{\scriptsize The datasets, corresponding statistics, and the rate of generating new items at each step.}\label{table:datasets}
\end{table}

\para{Environment and Datasets} We implemented \algonameapx in C++. The implementation of \algonameapx never loads the entire sample
to the main memory, which makes it very practical when using large samples. The
experiments were run on a Opteron 6282 SE CPU (2.6 GHz) with 12GB of RAM. We
tested our method on graphs from the SNAP
repository\footnote{\url{http://snap.stanford.edu}} (see
Table~\ref{table:datasets} for details). We always consider the graphs to be
directed, replacing undirected edges with two directed ones.
%\begin{itemize*}
% \item Enron-Email~\cite{klimt2004introducing}: The email communication of Enron Corporation.
% \item Brightkite~\cite{ChoML11}: A location-based online social network, where each edge represents a friendship tie.
% \item Epinion~\cite{richardson2003trust}: The trust relationships in \texttt{Epinion.com}.
%%  \item wiki-Vote~\cite{wikivote1,wikivote2}: A network between Wikipedia users where each edge shows a vote from a user for another one.
%  \item web-Notredame~\cite{notredame}: Network of ``University of Notre Dame'' where each edge represents a hyperlink from a page to another page.
% \item web-Google~\cite{webgoogle}: A dataset released by Google in 2002, where each node is a page and a directed edge represents a hyperlink between the pages.
% \item Twitter~\cite{McAuleyL12}: A combined ego network, where each edge is directed from a node to another node that it follows.
%\end{itemize*}
%\mynote[Matteo]{Do we need to introduce the datasets here? what about just reference and the Table 1?}
%\reply[Ahmad]{No, you can remove this list, we don't even need to have the
%references, the link to snap.stanford.edu is enough. Matteo}

\para{Generating process}
The generating process $\sys=(\family,\pi)$ we use in our experiments (except
those in Sect.~\ref{sec:example}) simulates an \emph{Independent-Cascade (IC)
model}~\citep{Kempe2003}. Since explicitly computing $\pi(S)$ in this case does
not seem possible, we simulate the creation of items according to this model as
follows. At each time $t$, items are generated in two phases: a ``creation''
phase and a ``diffusion'' phase. In the creation phase, we
simulate the creation of ``rumors'' at the nodes: we flip a biased coin
for each node in the graph, where the bias depends on the out-degree of the
node. We assume a partition of the nodes into classes based on their
out-degrees, and, we assign the same head probability for the biased coins of
nodes in the same class, as shown in Table~\ref{tab:bias}. In
Table~\ref{table:datasets}, for each dataset we report the size of the classes
and the expected number of flipped coins with outcome head at each time
(rightmost column).
%\begin{table}[ht]
%\centering
%\resizebox{0.95\columnwidth}{!}{
%\begin{tabular}[scale=0.5]{l r r c r}
\begin{table}[ht]
\centering
\resizebox{0.55\columnwidth}{!}{
	\begin{tabular}[scale=0.5]{lcl}
		\toprule
		Class & Nodes in class & Bias\\
		\midrule
		$V_{1K}$ & $\{i\in V ~:~ \degree^+(i) \geq 1000\}$ & 0.1\\
		$V_{500}$ &  $\{i\in V ~:~ 500\leq \degree^+(i) < 1000\}$ & 0.05 \\
		$V_{100}$ & $\{i\in V ~:~ 100\leq \degree^+(i) < 500\}$  & 0.01 \\
		$V_{0}$   & $\{i\in V ~:~ \degree^+(i) < 100\}$ & 0.0\\
		\bottomrule
	\end{tabular} }
	\caption{\scriptsize Classes and bias for the generating process.}
	\label{tab:bias}
\end{table}
Let now $v$ be a node whose coin had outcome head in the most recent flip. In
the ``diffusion'' phase we simulate the spreading of the ``rumor'' originating
at $v$ through network according to the IC model, as follows. For each directed
edge $e=u\rightarrow w$ we fix a probability $p_e$ that a rumor that reached $u$
is propagated through this edge to node $w$ (as in IC model), and events for
different rumors and different edges are independent. Following the
literature~\cite{Kempe2003,Chen2009,Chen2010,jung2011irie,tang2014influence}, we
use $p_{u\rightarrow w} = \frac{1}{\degree^-(w)}$. If we denote with $S$ the
final set of nodes that the rumor created at $v$ reached during the (simulated)
diffusion process (which always terminates), we have that through this process
we generated an item $(t,S)$, without the need to explicitly define $\pi(S)$.
%where $\degree^+(i)$ is the out-degree of node $i$ (see
%Table~\ref{table:datasets}). We group the nodes based on their degrees as
%follows:
%\begin{itemize*}
% \item $V_{1K} = \{i\in V ~:~ \degree^+(i) \geq 1000\}$,
% \item $V_{500}  = \{i\in V ~:~ 500\leq \degree^+(i) < 1000\}$,
% \item $V_{100}  = \{i\in V ~:~ 100\leq \degree^+(i) < 500\}$, and
% \item $V_{0}    = \{i\in V ~:~ \degree^+(i) < 100\}$,
%\end{itemize*}
%where $\degree^+(i)$ is the out-degree of node $i$ (see Table~\ref{table:datasets}).
%$$
%   \pi_i = \left\{
%     \begin{array}{ll}
%       0.1 &: i \in V_{1K}\\
%       0.05 &: i \in V_{500} \\
%       0.01 &: i \in V_{50} \\
%       0 & : \text{otherwise} \\
%     \end{array}
%   \right.
%$$
%Also, f Finally, note that for each set of nodes, $S$, this model implicitly imposes a probability $\pi(S)$, that is unknown to \algonameapx.

%As the model of propagation, we consider the Independent-Cascade
%model~\cite{Kempe2003}:  each directed edge $e=v\rightarrow w$ has a probability
%$p_e$ that a new item at node $v$ is propagated through this edge to node $w$,
%and events for different items are independent. Following the parameters
%reported in the
%literature~\cite{Kempe2003,Chen2009,Chen2010,jung2011irie,tang2014influence}, we
%set $p_{v\rightarrow w} = \frac{1}{\degree^-(w)}$.
%For every edge $e=(a,b)$ we let the activation probability $p_e$ to be $\frac{1}{\degree^-(b)}$ which is a commonly used parameter~.

%\todo[Ahmad]{Please use the ctable package for tables, they look much better.}

\subsection{Efficiency and Accuracy}
In Sect.~\ref{sec:optimize} we showed that when a run of \algonameapx converges
(according to a sample $\Sample$) the computed $c$-schedule is optimal with
respect to the sample $\Sample$ (Lemma~\ref{lem:optimal_sample}). In our first
experiment, we measure the rate of convergence and the execution time of
\algonameapx. We fix $\epsilon=0.1$, $\theta=0.75$, and consider
$c\in\{1,3,5\}$. For each dataset, we use a sample $\Sample$ that satisfies~\eqref{eq:samp_cond}, and run \algonameapx
for 30 iterations. Denote the schedule computed at round $i$ by $\sched^i$.
As shown in Figure~\ref{fig:conv}, the sequence of cost values of the schedules
$\sched^i$'s, $\cost_\theta(\sched^i,\Sample)$, converges extremely fast after few iterations.

\begin{table}[ht]
\centering
\resizebox{0.65\columnwidth}{!}{
\begin{tabular}[scale=0.5]{l r r r}
\toprule
Datasets	& $|\Sample|$ & avg. item size & avg. iter. time (sec)\\
%\hline \hline
\midrule
Enron-Email		&97309	&12941.33  & 	204.59\\
Brightkite		&63652	&17491.08  & 	144.35\\
web-Notredame	&393348	&183.75	   &  	10.24\\
web-Google		&998038	&704.74 	   &	121.88\\
\bottomrule
\end{tabular}
}
\caption{\scriptsize Sample size, average size of items in the sample, and the running time of each iteration in \algonameapx (for $c=1$).}\label{table:time}
\end{table}

For each graph, the size of the sample $\Sample$, the average size of sets in
$\Sample$, and the average time of each iteration is given in
Table~\ref{table:time}. Note that the running time of each iteration is a
function of both sample size and sizes of the sets (informed-sets) inside the
sample.

%As shown in Figure\ref{fig:optimizer}, {\optimizer} quickly converges to (an almost) optimal schedule.
%\mynote[Ahmad]{I'm removing the results for twitter and epinion: there was a problem for these two + 4 data sets seems enough (2 directed and 2 undirected)}

\begin{figure}[htbp]
\subcaptionbox*{}{\includegraphics[width=0.49\textwidth]{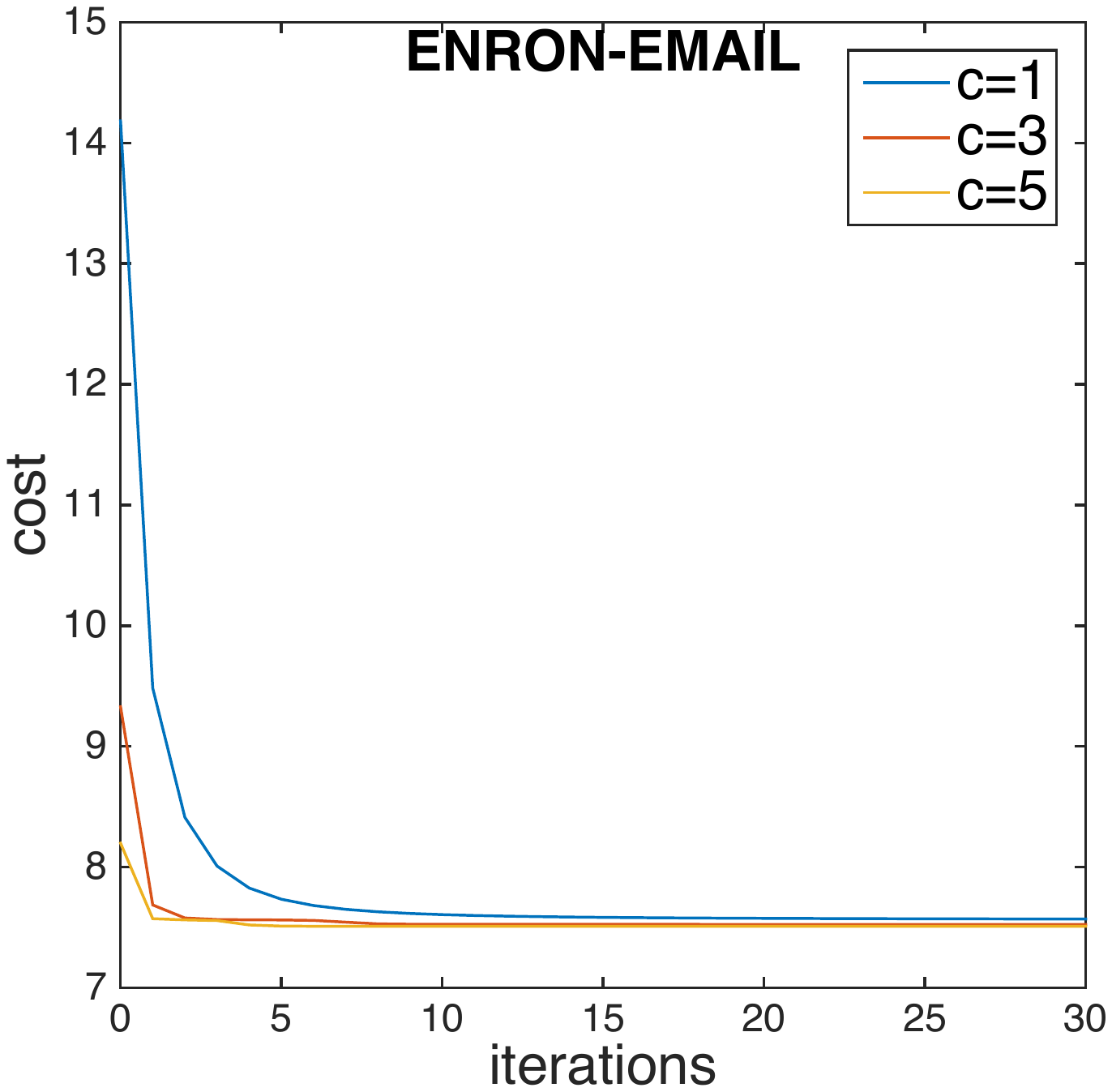}}
\subcaptionbox*{}{\includegraphics[width=0.49\textwidth]{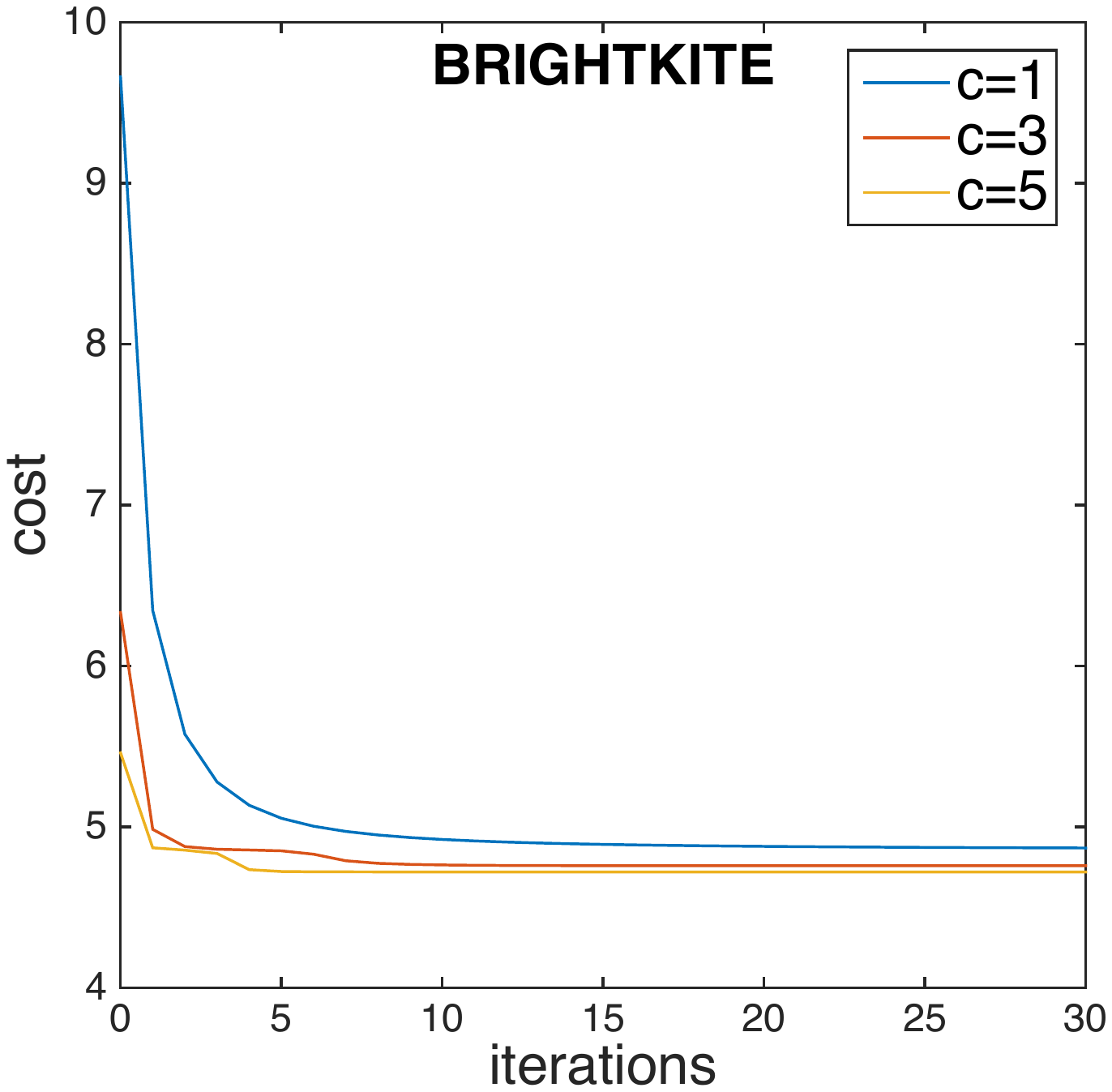}}
\subcaptionbox*{}{\includegraphics[width=0.49\textwidth]{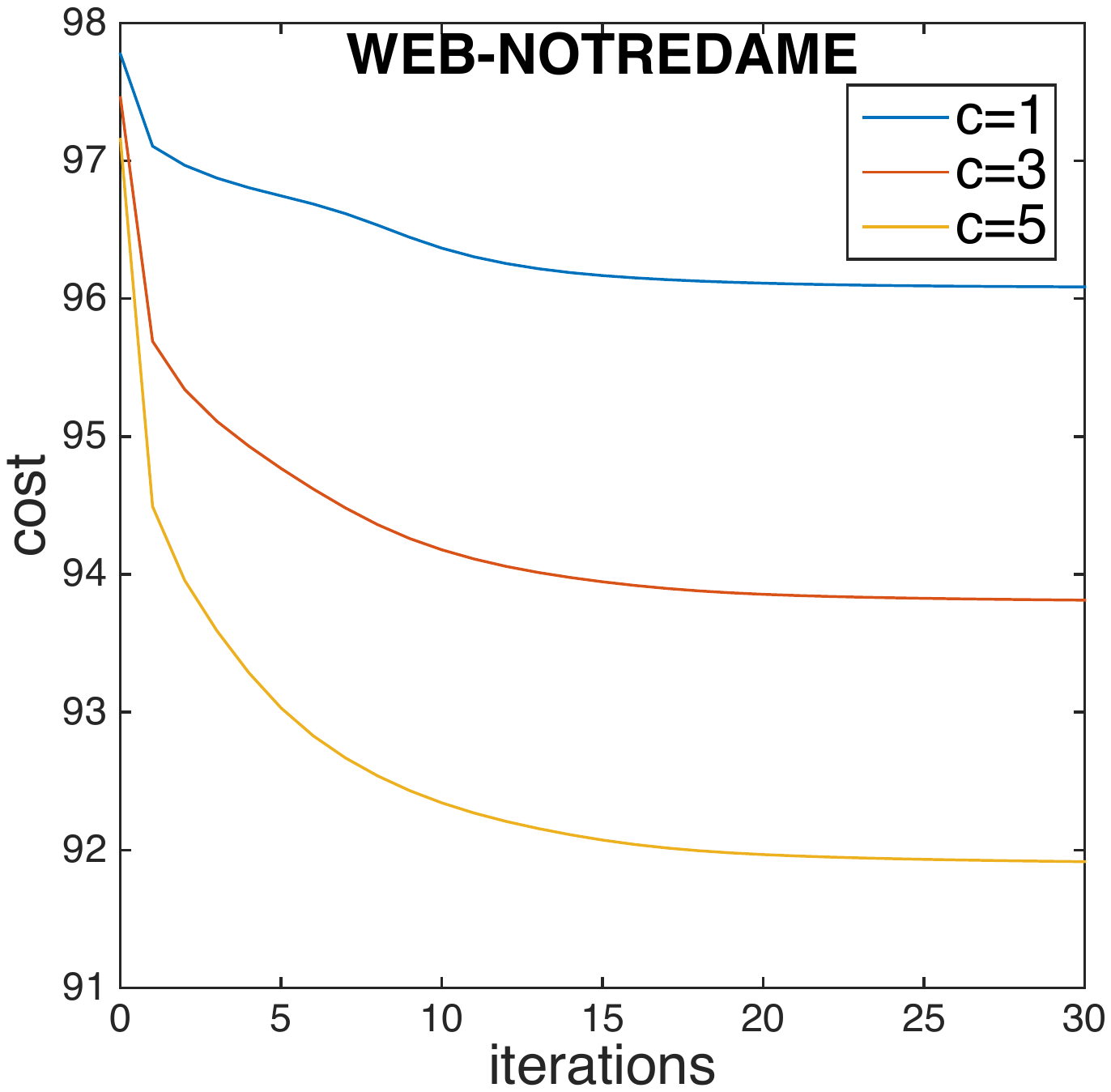}}
\subcaptionbox*{}{\includegraphics[width=0.49\textwidth]{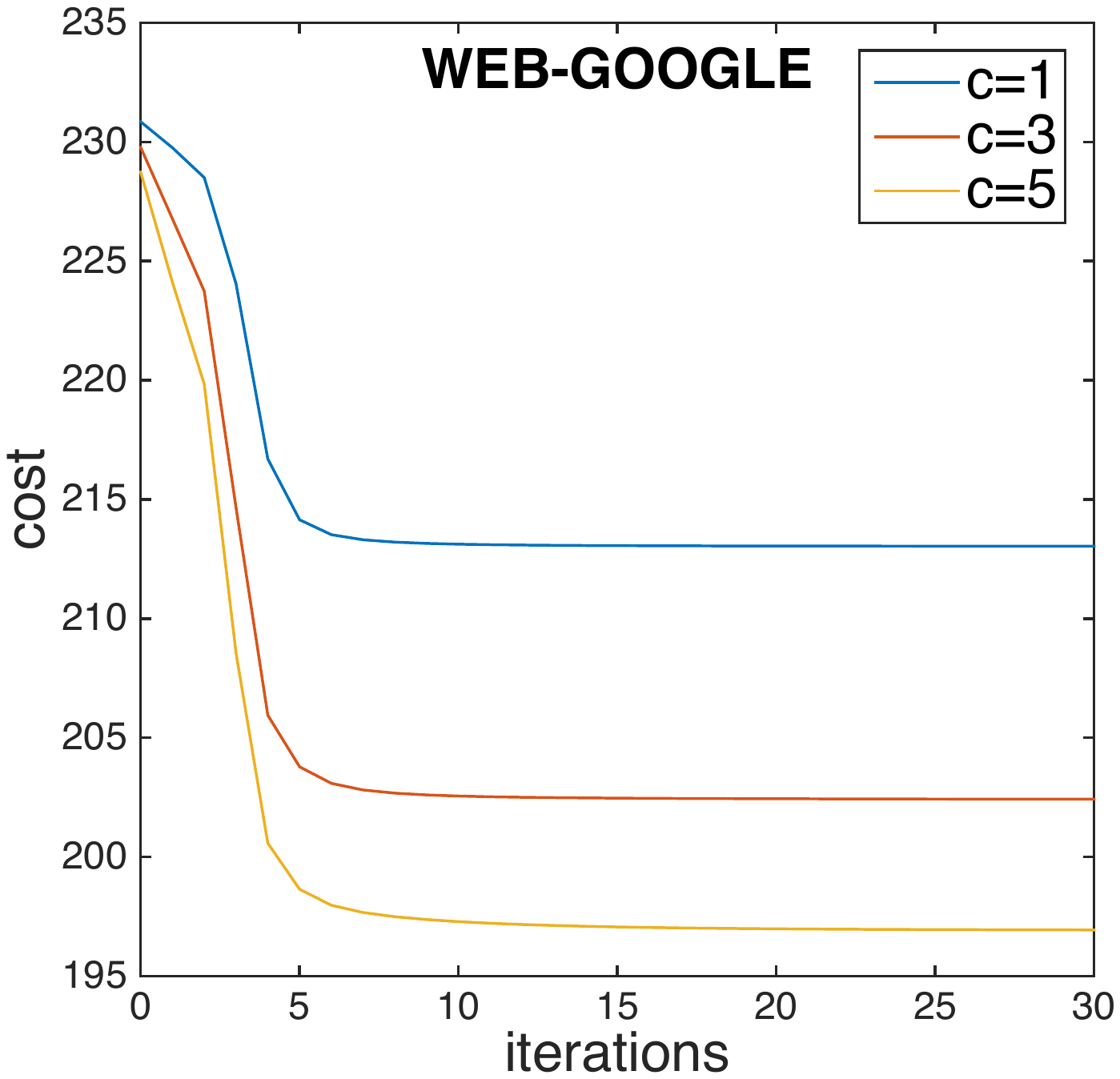}}
\caption{\scriptsize The cost of intermediate $c$-schedules at iterations of \algonameapx  according to $\Sample$.}\label{fig:conv}
\end{figure}

Next, we extract the $1$-schedules  output by \algonameapx, and compare its cost to four other natural schedules: \texttt{unif}, \texttt{outdeg}, \texttt{indeg}, and \texttt{totdeg} that probe each node, respectively, uniformly, proportional to its out-degree, proportional to its in-degree, and proportional to the number of incident edges. Note that for undirected graphs \texttt{outdeg}, \texttt{indeg}, and \texttt{totdeg} are essentially the same schedule.

To have a fair comparison among the costs of these schedules and \algonameapx, we calculate their costs according to 10  independent samples, $\Sample_1,\ldots,\Sample_{10}$ that satisfy \eqref{eq:samp_cond}, and compute the average. The results are shown in Table~\ref{table:compare}, and show that \algonameapx outperforms the other four schedules.

%(i) \texttt{uniform} schedule, (ii) \texttt{outdeg} schedule that probes each node proportional to its out-degree, (iii) \texttt{indeg} schedule that probes each node proportional to its in-degree, and (iv) \texttt{totdeg} (note that for undirected graphs \texttt{outdegree} and \texttt{indegree} are the same). We compared the cost of these schedules according to another sample $\Sample'$ that satisfies the condition in Lemma~\ref{lem:chernoffcost}, and the results are shown in Table~\ref{table:compare}; numbers are obtained by averaging the 10 runs over 10 different sample $\Sample'$s.

\begin{table}[ht]
\centering
\resizebox{0.65\columnwidth}{!}{
\caption{\scriptsize Comparing the costs of 5 different 1-schedules.}
\label{table:compare}
\begin{tabular}{lrrrrr}
\toprule
Dataset       & \algonameapx & \texttt{uniform} & \texttt{outdeg} & \texttt{indeg} & \texttt{totdeg} \\
\midrule
Enron-Email   &       7.55 & 14.16 & 9.21 & 9.21 & 9.21                  \\
Brightkite    &       4.85 & 9.64 & 6.14 & 6.14 & 6.14                  \\
web-Notredame &       96.10 & 97.78 & 97.37 & 97.43 & 97.40             \\
web-Google    &       213.15 & 230.88 & 230.48 & 230.47 & 230.47       \\
\bottomrule
\end{tabular}
}
\end{table}

% --------------------------------------------------
% --------------------------------------------------
% --------------------------------------------------
\subsubsection{A Test on Convergence to  Optimal Schedule}\label{sec:example}
Here, we further invetigate the convergence of \algonameapx, using an example
graph and process for which we know the \emph{unique} optimal schedule. We study
how close the \algonameapx output is to the optimal schedule when (i) we start
from different initial schedules, $\sched^0$, or (ii) we use samples $\Sample$'s
obtained during time intervals of different lengths.

Suppose $G=(V,E)$ is the complete graph where $V=[n]$. Let $\sys=(\family, \pi)$ for  $\family = \{S \in 2^{[n]} \mid 1 \leq|S| \leq 2\}$, and $\pi(S)=\frac{1}{|\family|}$. It is easy to see that $\cost_\theta(\sched)$ is a symmetric function, and thus, the uniform schedule is optimal. Moreover, by Corollary~\ref{corol:convexity} the uniform schedule is the only optimal schedule, since $\{v\} \in \family$ for every $v\in V$. Furthermore, we let $\theta=0.99$ to increase the sample complexity (as in Lemma~\ref{lem:chernoffcost}) and make it harder to learn the uniform/optimal schedule.

\begin{figure}[htbp]
	\subcaptionbox*{}{\includegraphics[width=0.49\textwidth]{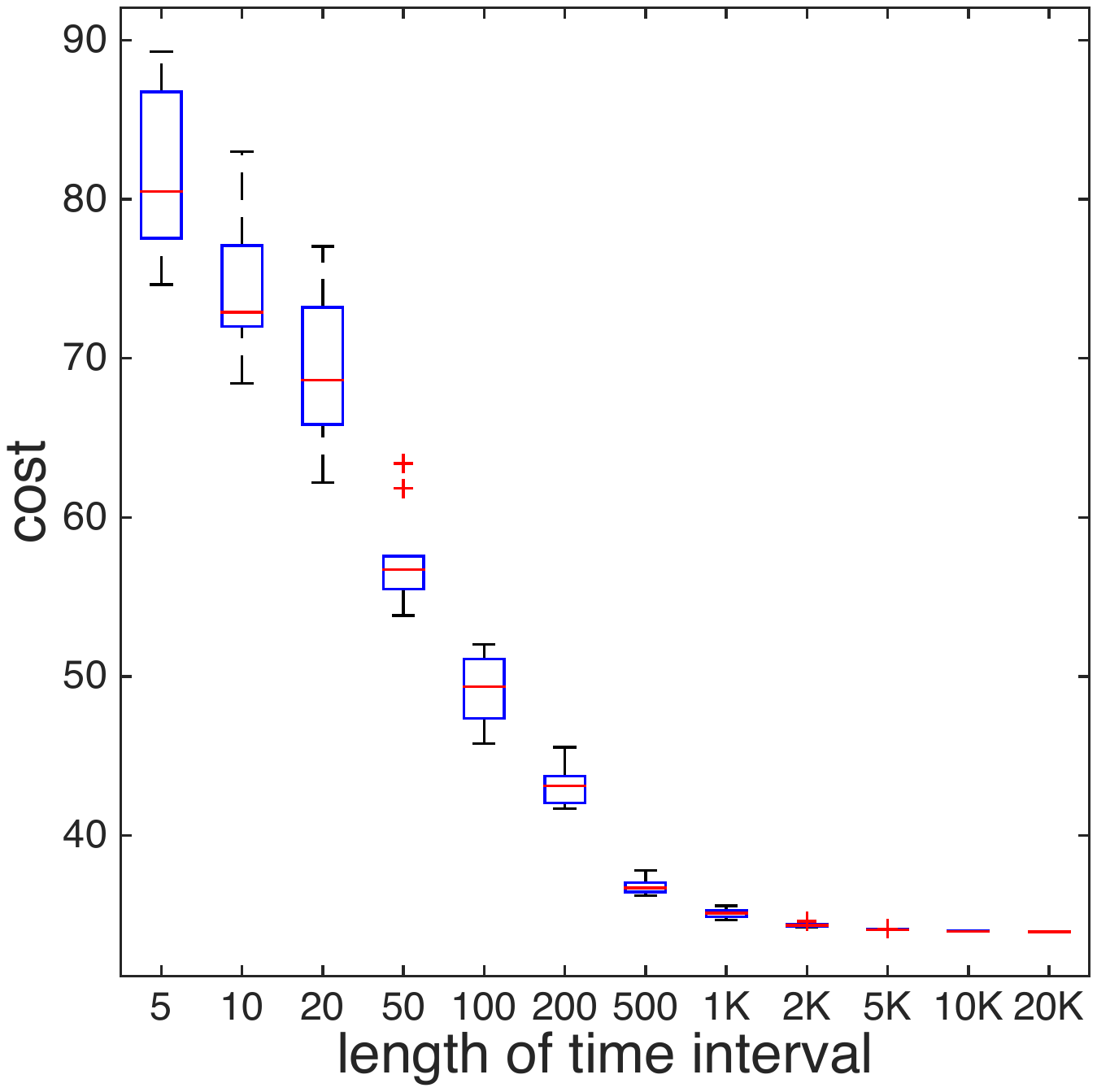}}
	\subcaptionbox*{}{\includegraphics[width=0.49\textwidth]{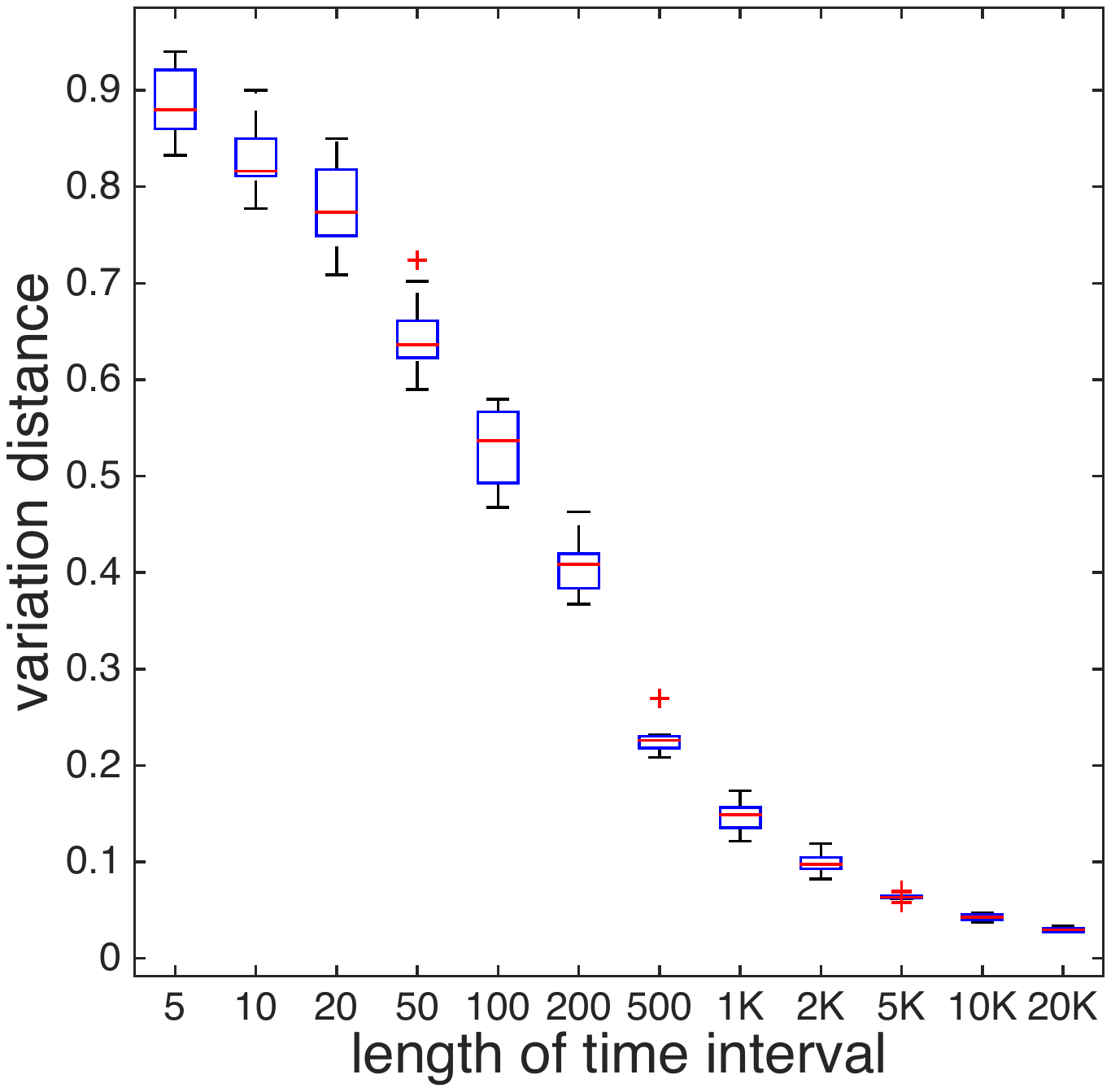}}
	\caption{\scriptsize The cost of \algonameapx outputs and their variation distance to the optimal schedule.} \label{fig:unique}
\end{figure}
In our experiments we run the \algonameapx algorithm, using (i) different \emph{random} initial schedules, and (ii) samples $\Sample$ obtained from time intervals of different lengths. For each sample, we run \algonameapx 10 times with 10 different random initial schedules, and compute the \textbf{exact} cost of each schedule, and its variation distance to the uniform schedule. Our results are plotted in Figure~\ref{fig:unique}, and as shown, by increasing the sample size (using longer time intervals of sampling) the output schedules gets very close to the uniform schedule (the variance gets smaller and smaller).

%-----------------------------------------------------------
%-----------------------------------------------------------
%-----------------------------------------------------------

\subsection{Dynamic Settings}\label{sec:dynset}
In this section, we present experimental results that show how our algorithm can
adapt itself to the new situation. The experiment is illustrated in
Fig.~\ref{figure:changes}. For each graph, we start by following an optimal
1-schedule in the graph. At the beginning of each ``gray'' time interval, the
labels of the nodes are permuted randomly, to impose great disruptions in the
system. Following that, at
the beginning of each ``green'' time interval our algorithm starts gathering
samples of $\sys$. Then, \algonameapx computes the schedule for the new sample, using 50 rounds of iterations, and starts probing. The length of each colored time
interval is $R = \frac{3(\log(n)+\log(2))}{\epsilon^2(1-\theta)}$, $\epsilon=0.5$, motivated by
Theorem~\ref{thm:approx_sample}.

%\todo[ahmad]{for $\epsilon=0.1$ and without making nodes generate up to 10: running}
%and other intervals (three time intervals before, between, and after the colored time intervals) have length $10R$. Also, for sake of illustration, we assumed at each time step each node $i$ may generated up to 10 items (each having a chance of $\pi_i$ to be generated). Thus, the number of generated items at each node is a binomial random variable with parameters 10 and $\pi$.
%Note that we start the generating process at time 0. Hence, during the first few time steps the load of the generating process increases.
%<<<<<<< Updated upstream

Since the cost function is defined asymptotically (and explains the asymptotic
behavior of the system in response to a schedule), in
Figure~\ref{figure:changes} we plot the load of the system $L_\theta(t)$ over
the time (blue), and the \emph{average} load in the normal and perturbed time
intervals (red). Based on this experiment, and as shown in
Figure~\ref{figure:changes}, after adapting to the new schedule, the effect of
the disruption caused by the perturbation disappears immediately. Note that when
the difference between the optimal cost and any other schedule is small (like
web-Notredame), the jump in the load will be small (e.g., as shown in
Figure~\ref{fig:conv} and Table~\ref{table:compare}, the cost of the initial
schedule for web-Notredame is very close the optimal cost, obtained after 30
iteration).
%=======
%Since the cost function is defined asymptotically (and explains the asymptotic behavior of the system in response to a schedule), in  Figure~\ref{figure:changes} we plot the load of the system $L_\theta(t)$ over the \ahmad{time (blue), and the \emph{average} load in the normal and perturbed time intervals (red)}. Based on this experiment, and as shown in Figure~\ref{figure:changes}, after adapting to the new schedule, the effect of the disruption caused by the perturbation disappears immediately. \ahmad{Note that when the difference between the optimal cost and any other schedule is small (like web-Notredame), the jump in the load will be small (e.g. as shown in Figure~\ref{fig:conv}), the cost of the initial schedule for web-Notredame is very close the optimal cost (obtained after 30 iteration).}
%>>>>>>> Stashed changes

%Based on our experiments, shown in Figure~\ref{figure:changes}, we observe that (i) a sample gathered during a very short time interval suffices to minimize the load (and therefore the cost) of the generating process, and (ii) after adapting to the new schedule, the effect of the perturbation disappears immediately (see the Section~\ref{sec:dynamic} for theoretical upper bound). Finally, note that for each time $t$, we plot the loads $L_\theta(t)$, and not the cost function. This is because the load is what we can observe (which is a draw of random variables $L_\theta(t)$), as the cost function is the expected value of these random variables averaged over the time, and it explains  the asymptotic behavior of the system.

\begin{figure}
%  \centering
  \subcaptionbox*{}{\includegraphics[width=0.49\textwidth]{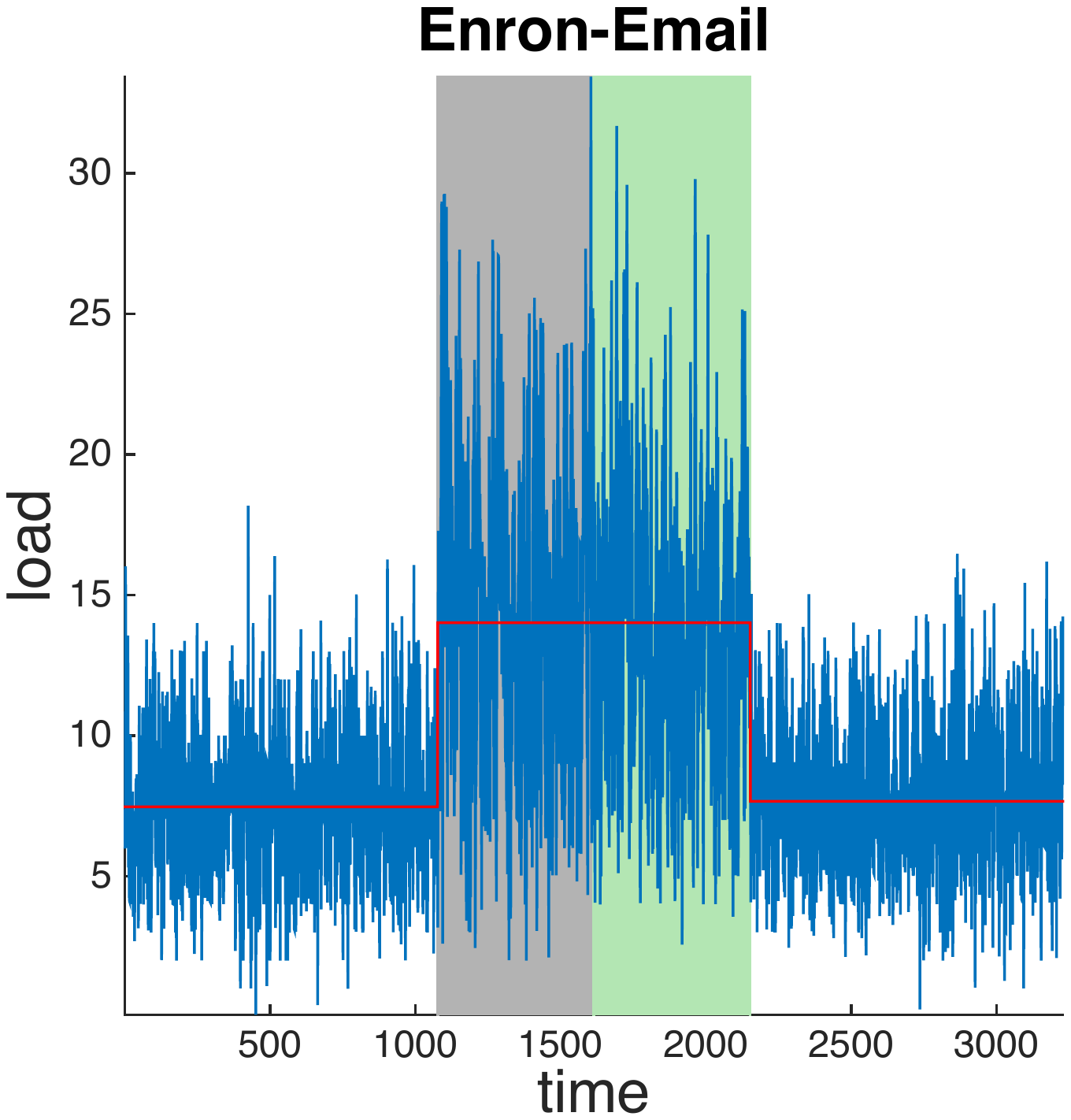}} %\hspace{1em}%
  \subcaptionbox*{}{\includegraphics[width=0.49\textwidth]{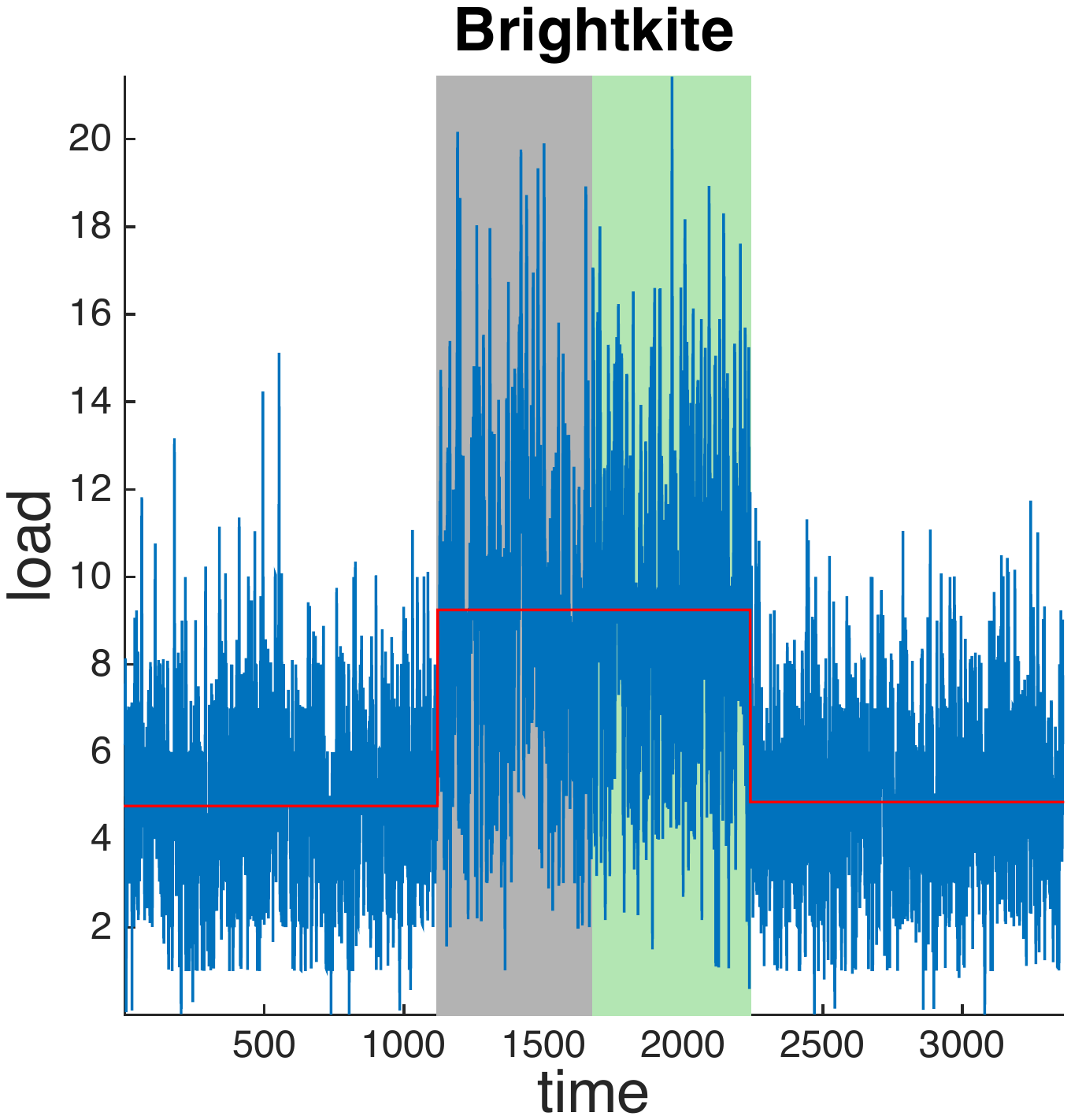}} %\hspace{1em}
%  \subcaptionbox{Epinion\label{fig:change:epinion}}{\includegraphics[width=0.3\textwidth]{figures/change/epinion.pdf}}\\ %\hspace{1em}\\
  \subcaptionbox*{}{\includegraphics[width=0.49\textwidth]{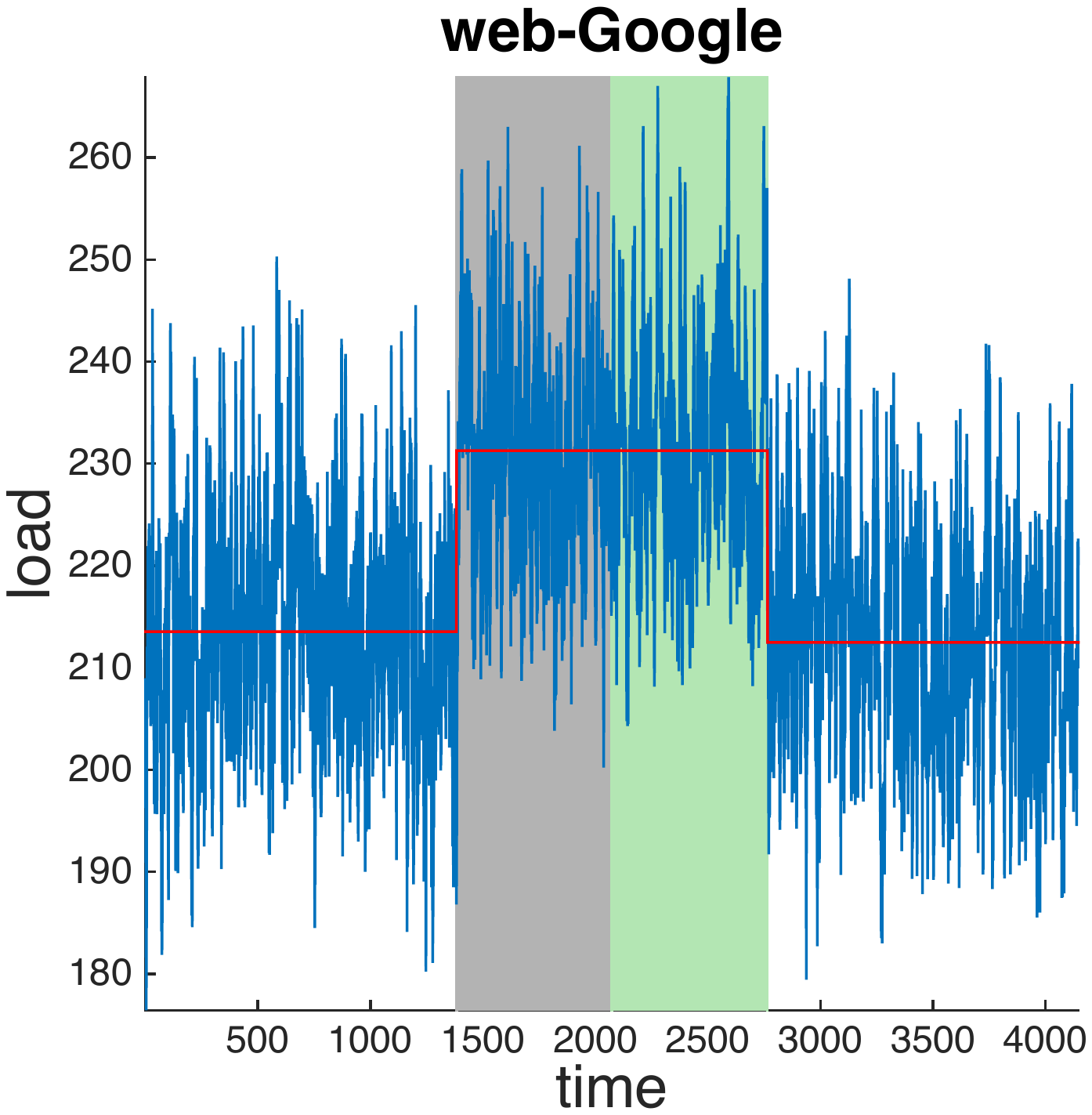}} %\hspace{1em}
  \subcaptionbox*{}{\includegraphics[width=0.49\textwidth]{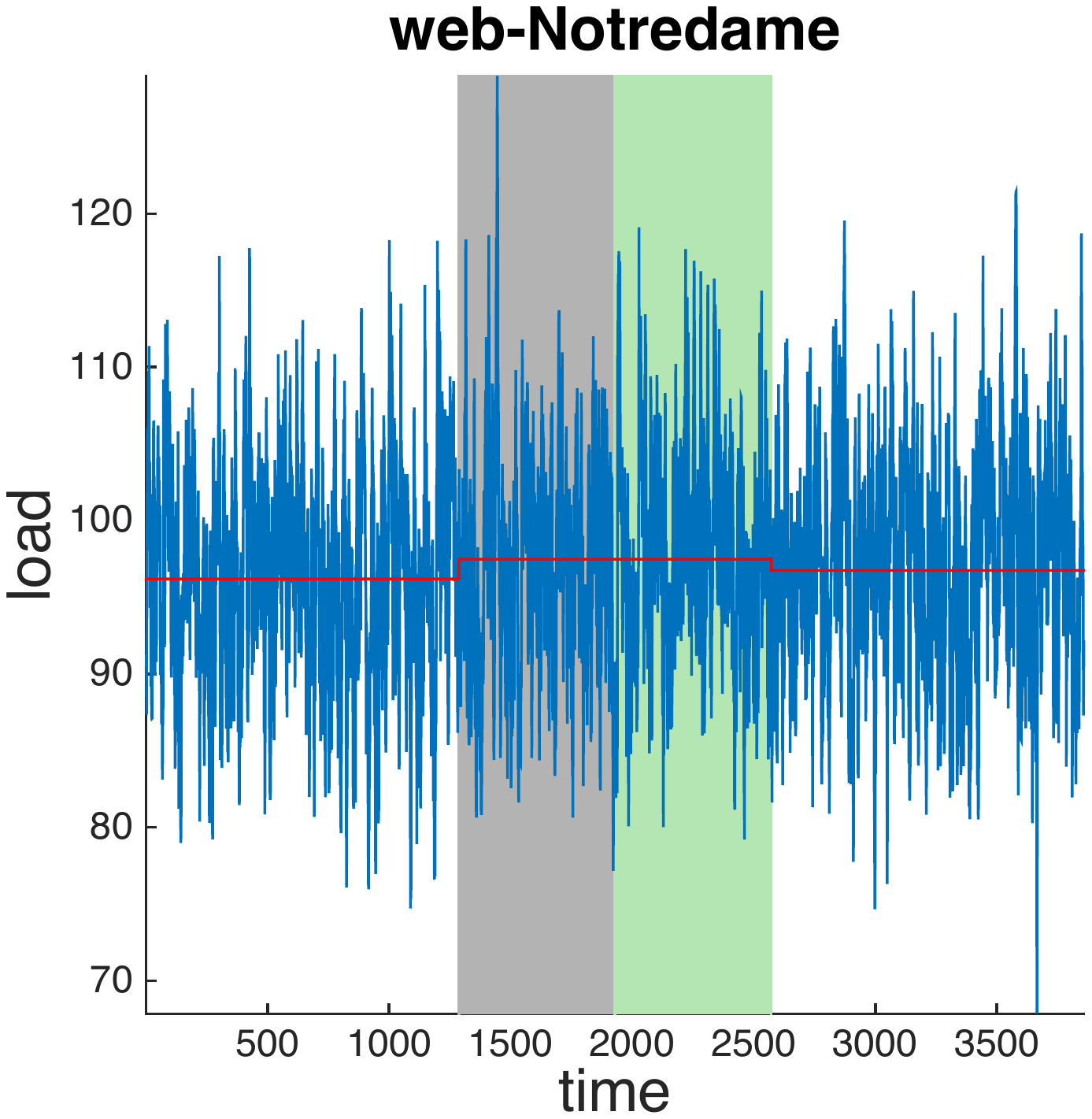}} %\hspace{1em}
  \caption{\scriptsize Perturbation, Sampling, and Adapting (For details see Section~\ref{sec:dynset}).}\label{figure:changes}
\end{figure}

\section{Conclusions}\label{sec:concl}
We formulate and study the $(\theta,c)$-Optimal Probing Schedule Problem,
which requires to find the best probing schedule that allows an observer to find
most pieces of information recently generated by a process $\sys$, by probing a
limited number of nodes at each time step.

We design and analyze an algorithm, \algoname, that can solve the problem
optimally if the parameters of the process $\sys$ are known, and then design a
variant that computes a high-quality approximation of the optimum schedule when
only a sample of the process is available. We also show that \algoname can be
adapted to the MapReduce framework of computation, which allows us to scale up
to networks with million of nodes. The results of experimental evaluation on a
variety of graphs and generating processes show that \algoname and its variants
are very effective in practice.

Interesting directions for future work include generalizing the problem to allow
for non-memoryless schedules and different novelty functions.

\section{Acknowledgements}
This work was supported by NSF grant IIS-1247581 and NIH grant R01-CA180776.

\bibliographystyle{abbrvnat}
\bibliography{catchRef}
%\input{./tex/biblio.tex}

% \appendix
% \input{./tex/appendix.tex}
\end{document}